\documentclass[12pt]{article}
\usepackage[T1]{fontenc}
\usepackage[dvips]{graphicx}
\graphicspath{{images/}}
\setlength{\textwidth}{6in} \setlength{\textheight}{8.5in}
\setlength{\topmargin}{-0.25in} \setlength{\evensidemargin}{.25in}
\setlength{\oddsidemargin}{.25in}
\setlength{\parskip}{.65ex}

\usepackage{verbatim}
\usepackage{amsmath,amsthm}
\usepackage{xspace}
\usepackage{pifont}
\usepackage{graphicx}
\usepackage{amssymb}
\usepackage{epic, eepic}
\usepackage{dsfont}
\usepackage{amssymb}
\usepackage{makeidx}
\usepackage{mathrsfs}
\usepackage{exscale}
\usepackage{color} 
\usepackage{overpic} 
\usepackage{bm}
\usepackage{bbm}
\usepackage{booktabs} 
\usepackage{color, colortbl}
\usepackage{subcaption}
\usepackage[round]{natbib}

\RequirePackage[colorlinks,citecolor=blue,urlcolor=blue]{hyperref}

\definecolor{Gray}{gray}{0.9}

\usepackage{amsmath,afterpage}
\usepackage{epsf}
\usepackage{graphics,color}

\def\0{\mathbf{0}}

\def\rr{\rightarrow}
\def\dr{\downarrow}

\def \< {\langle}
\def \> {\rangle}

\def\ol{\overline}

\def\beqa{\begin{eqnarray}}
\def\eeqa{\end{eqnarray}}
\def\beqas{\begin{eqnarray*}}
\def\eeqas{\end{eqnarray*}}

\newtheorem{theorem}{Theorem}[section]
\newtheorem{lemma}[theorem]{Lemma}

\newtheorem{proposition}[theorem]{Proposition}

\newtheorem{remark}[theorem]{Remark}

\newtheorem{definition}[theorem]{Definition}

\numberwithin{equation}{section}
\newcommand{\hatd}[1]{{}}




\setcounter{section}{0}

\newcommand{\bd}{\begin{displaymath}}
\newcommand{\ed}{\end{displaymath}}
\newcommand{\be}{\begin{equation}}
\newcommand{\ee}{\end{equation}}
\newcommand{\bq}{\begin{eqnarray}}
\newcommand{\eq}{\end{eqnarray}}
\newcommand{\bn}{\begin{eqnarray*}}
\newcommand{\en}{\end{eqnarray*}}

\def\P{\mathbb{P}}


\usepackage{authblk}

\title{Closed-Loop Nash Competition for Liquidity\footnote{The authors are grateful to Dan Lacker and Kasper Larsen for fruitful discussions and pointing out a number of pertinent references. The very detailed and constructive comments of two anonymous referees are also gratefully acknowledged.}}
\author[]{Alessandro Micheli\thanks{Supported by the EPSRC Centre for Doctoral Training in Mathematics of Random Systems: Analysis, Modelling and Simulation (EP/S023925/1).}}
\author[]{Johannes Muhle-Karbe}
\author[]{Eyal Neuman }

\affil[]{Department of Mathematics, Imperial College London}
 
\begin{document}

\maketitle

 \vspace{-1cm}

\begin{abstract}
We study a multi-player stochastic differential game, where agents interact through their joint price impact on an asset that they trade to exploit a common trading signal. In this context, we prove that a \emph{closed-loop} Nash equilibrium exists if the price impact parameter is small enough. Compared to the corresponding open-loop Nash equilibrium, both the agents' optimal trading rates and their performance move towards the central-planner solution, in that excessive trading due to lack of coordination is reduced. However, the size of this effect is modest for plausible parameter values.
 \end{abstract} 


\begin{description}
\item[Mathematics Subject Classification (2010):] 49N90, 91A25, 91G10, 93E20
\item[JEL Classification:] C73, C02, C61, G11, G12
\item[Keywords:]   price impact, stochastic games, closed-loop Nash equilibrium
\end{description}

\section{Introduction}

Most quantitative trading strategies are based on some form of ``signals'' about future price changes. A major obstacle to make such strategies profitable in practice is the adverse price impact caused by the rapid execution of large orders. In particular for large portfolios, this requires to strike a delicate balance between exploiting trading opportunities, but only doing so if the respective signals are strong and persistent enough to outweigh the associated trading costs.

Accordingly, a large and rapidly growing literature studies how to optimally exploit trading signals with various dynamics in the presence of superlinear trading costs as in the models pioneered by~\cite{OPTEXECAC00}, cf., e.g.,~\cite*{garleanu.pedersen.13,GARLEANU16,Car-Jiam-2016,moreau.al.17, lehalle.neuman.19,collin.al.20} and the references therein.\footnote{Linear costs corresponding to bid-ask spreads, which are the dominant frictions for somewhat smaller portfolios, are studied in~\cite{delataillade.al.12,martin.14}, for example.}  In this context, many investors use variants of the same trading signals, such as moving averages of past price changes~\citep{garleanu.pedersen.13}, order-book imbalances~\citep*{Cont:2013aa, cont14, citeulike:12820703,Car-Jiam-2016,lehalle.neuman.19}, or price-dividend ratios~\citep{barberis.00}.  Such ``crowded'' signals naturally lead to stochastic games, where the agents interact through the rates at which they draw on the same pool of liquidity. 

``Market-impact games'' of this kind have been analyzed in depth for the optimal execution of a single exogenously given order, exploiting that the corresponding optimal trading patterns are deterministic in many cases as, cf.~\cite*{brunnermeier.pedersen.05,carlin.al.07,schied.schoeneborn.09} as well as many more recent studies.\footnote{In particular, \cite*{chen.al.21a} show how to endogenize price dynamics, interest rates, and permanent price impact in a setting with deterministic trading schedules.} More recently, a number of papers also analyze the competition for liquidity between agents that trade to exploit a common trading signal~\citep*{voss.19,DrapeauLuoSchiedXiong:19,EvangelistaThamsten:20,N-V-2021}. To obtain tractable results, these papers focus on \emph{open-loop} Nash equilibria. This means that each player considers the others' actions to be fixed, when deciding whether to unilaterally deviate from the putative equilibrium. Each player's optimality condition can in turn be derived in analogy to the single-agent version of the model, with the other agents' actions acting as additional exogenous inputs. A Nash equilibrium can in turn be derived in a second step by imposing the consistency condition that all single-agent optimality conditions must hold simultaneously. For optimal trading problems with mean-reverting signals and linear price impact, this leads to multidimensional but linear systems of forward-backward stochastic differential equations, which admit closed-form solutions.

Even more tractable results obtain in the mean-field limit of many small agents, which often reduces the analysis to single-agent stochastic control problems with the average of all agents actions acting as an additional exogenous input. In the present context, such models have been studied by \cite*{CardaliaguetLehalle:18, CasgrainJaimungal:18, CasgrainJaimungal:20, FuGraeweHorstPopier:20,N-V-2021}, for example.

In contrast, much less is known about \emph{closed-loop} Nash equilibria, where other agents react to unilateral deviations from a putative equilibrium. More specifically, each agent then assumes that the feedback form of the others' controls are fixed but takes into account unilateral deviations through their impact on the state variables of the model. The controls of the other agents in turn can no longer be treated as an exogenous input for the single-agent optimization, leading to a much more involved coupling between the agents' individual optimality conditions.

In this paper, we study a stylized model for such closed-loop Nash equilibria, and analyze how this more sophisticated form of competition changes optimal trading patterns and welfare relative to the open-loop and to a central-planner solution. In order to obtain tractable results, we focus on a symmetric infinite-horizon model,\footnote{Finite-horizon models lead to analytically intractable systems of coupled differential equations, compare~\cite{carmona.yang.08}. In a different price impact model (where agents' holdings affect future expected returns), existence results for a finite horizon model have been obtained by~\cite*{chen.al.21}.} where a finite number of identical agents trade a risky asset, whose returns are partially predictable through a trading signal with Ornstein-Uhlenbeck dynamics. All agents' trading problems are intertwined through their common linear (instantaneous) price impact on the asset's execution price.

We show that the corresponding individual optimality and consistency conditions can be expressed in terms of a system of nonlinear algebraic equations. Unlike in more complex settings with endogenous price dynamics~\citep{sannikov.16,obizhaeva.wang.19}, we then prove that this system has a unique solution when the price impact parameter $\lambda$ is small enough. With this solution in hand, we can in turn verify that the proposed candidate indeed is a closed-loop Nash equilibrium, which is unique in the linear class that contains the (globally unique) open-loop and central-planner solutions. This analysis is complicated by the fact that even in the limit, the system does not admit a tractable  solution as in single-agent models~\citep{GARLEANU16}. Therefore, in order to show that our candidate value function is well defined and establish a verification theorem, we need to establish and exploit various implicit properties of the solution.

At the leading order for small price impact, the equilibrium trading rates admit asymptotic expansions, which disentangle the effects of trading costs, inventory costs and Nash competition between the agents. In our model, limited liquidity creates a negative externality, where agents only internalize the adverse effects that their price impact has on their own execution rates but not others'. Accordingly, Nash competition leads to excessive trading relative to the central-planner benchmark. Closed-loop equilibria, where agents react to out-of-equilibrium deviations, move part of the way from the open-loop solution to its central-planner counterpart. By somewhat reducing excessive trading, they in turn also reduce the ``price of anarchy'' by which the agents' optimal performance is reduced due to the lack of perfect coordination.\footnote{Unlike in~\cite{chen.al.21a}, where the difference between Nash equilibria and central planner solutions disappears over time, this difference is persistent in our model due to the changes in the investment opportunity set.}

However, both asymptotically and by the numerical computation of the exact equilibrium for realistic model parameters from~\cite{collin.al.20}, we find that the closed-loop equilibrium quantitatively lies much closer to its open-loop counterpart than to the central-planner solution. At least for the symmetric model that we consider here, this suggests that open-loop models can indeed serve as an accurate but much more tractable proxy for closed-loop Nash equilibria.

To the best of our knowledge, the present paper is the first existing example in the literature where open-loop and closed-loop equilibria are compared qualitatively and quantitatively in a finite-player game with interaction through the controls. Such comparisons were derived by \cite{CarmonaFouque} for systemic risk games, for example, which involve interactions through the state processes, but not through the controls. \cite{Lacker-Zariphopoulou, Lacker:2020aa,chen.al.21a} computed explicit closed-loop Nash equilibria for games with interactions through the controls, but the controls happen to be deterministic so that the open-loop and closed-loop equilibria coincide.

Some related work has appeared recently on convergence of the finite-player Nash equilibrium to the corresponding mean-field equilibrium in various settings.  \cite{Laur20} proved such convergence results for open-loop equilibria of games with idiosyncratic noise for each of the players. \cite{N-V-2021} studied the corresponding problem for execution games with common noise. \cite{lacker.leflem.21} and \cite{Djete21} proved the convergence of closed-loop solutions (under the a-priori assumption that they exist) to the mean-field solution in the case where each player is influenced by idiosyncratic and  common noise. In both these papers the idiosyncratic noise is crucial to establish the convergence. Even though the game in the present paper only has common noise, numerical evidence suggests that open- and closed-loop nevertheless converge to the many-player limit. A rigorous proof of this result -- including the correct formulation of the limiting model -- is a challenging open problem for future research.
 
The remainder of this article is organized as follows. In Section \ref{sec:model_setup} we present the multiplayer game. Section \ref{sec-results} contains our main results on the corresponding closed-loop Nash equilibrium. A comparison between the closed-loop, open-loop and central-planner solution is subsquently presented in Section \ref{sec-comp}. Section \ref{s:heuristics} outlines the heuristic derivation of the closed-loop equilibrium. The rigorous proofs of these results are in turn developed in Section \ref{sec-pf-cl}; the most onerous calculations are delegated to the appendix for better readability.

  \section{Model} \label{sec:model_setup}

\subsection{Financial Market}
Throughout the paper, we fix a filtered probability space $(\Omega, \mathcal F,(\mathcal F_t)_{t \geq 0}, \P)$ supporting two standard Brownian motions $(W^P_t)_{t \geq 0}$ and $(W_t)_{t \geq 0}$. We consider a financial market with two assets. The first one is safe, with price normalised to one. The other asset is risky; its  exogenous unaffected price process $(P_{t})_{t \geq 0}$ has dynamics  
\begin{equation}
dP_{t} = \mu_t dt + \sigma_P dW^P_{t},\quad P_{0} = p, 
\end{equation}
Here, the initial price level is a constant $p$ and the volatility is a positive constant $\sigma_P$; the expected returns have mean-reverting Ornstein-Uhlenbeck dynamics
\begin{equation}
\label{eq-ornstein-uhlenbeck}
d\mu_{t} = -\beta \mu_{t}dt + \sigma dW_{t},\quad \mu_{0} = m,
\end{equation}
for positive constants $\beta$, $\sigma$ and $m\in\mathbb{R}$. (We set the mean-reversion level to zero to simplify the already involved algebra below.) The current expected return $\mu_t$ can be interpreted as a signal about future price changes, such as dividend yields~\citep{barberis.00}, moving averages of past returns~\citep{garleanu.pedersen.13,martin.14}, or order-book imbalances~\citep{cont14, citeulike:12820703,Car-Jiam-2016,lehalle.neuman.19}, for example. 

\subsection{Agents}

 The assets are traded by $N \geq 2$ identical agents indexed by $n=1,\ldots,N$. Starting from an initial position $x\in\mathbb{R}$, they only adjust their risky holdings $\varphi^n=(\varphi^n_t)_{t \geq 0}$ at absolutely continuous trading rates $\dot{\varphi}^n_t=d\varphi^n_t/dt$, because their \emph{aggregate} trading activity $\dot\varphi=(\dot\varphi^1,\ldots,\dot\varphi^N)$ has an adverse linear temporary impact on the execution price:
 \be \label{def:S}
P^{\dot\varphi}_{t} = P_{t}  + \lambda \sum_{i=1}^{N} \dot\varphi^{i}_t, \qquad t \geq 0, 
\ee
for a positive constant $\lambda$. For each agent $n$, the others' trading rates are denoted by
\begin{equation*}
\dot\varphi^{-n} = (\dot\varphi^{1},\ldots, \dot\varphi^{n-1}, \dot\varphi^{n+1}, \ldots ,\dot\varphi^{N}).
\end{equation*}
As in \cite{GARLEANU16}, the agents choose their trading rates $\dot\varphi^n$ to maximize expected discounted returns over an infinite horizon, penalized for risk and transaction costs (relative to the unaffected execution prices):
\begin{equation} \label{def:FPGobjective}
J^{n}(\dot\varphi^{n}; \dot\varphi^{-n}) = \mathbb{E}\left[\int_0^\infty e^{-\rho t}\left(\mu_t \varphi^n_t -\frac{\gamma}{2}\left(\varphi^n_t\right)^2 -\lambda \dot\varphi_t^{n}\left(\sum_{i=1}^N\dot\varphi^i_t \right) \right)dt\right].
\end{equation}
Here, $\rho>0$ is the time-discount rate and $\gamma>0$ penalizes risk through a running cost on the agents' squared inventories.\footnote{As in many related studies, this is a reduced-form proxy for an exponential utility function with constant absolute risk aversion. Formally carrying out the analysis below for such a goal functional is still feasible by adapting single-agent results~\citep{muhlekarbe.al.21}. However, in the present game-theoretic context this would lead to a very high-dimensional system of nonlinear equations for which existence is unclear as in \cite{obizhaeva.wang.19}.} To ensure that all terms in these goal functionals are well defined, we focus on \emph{admissible} trading rates for which $\dot\varphi=(\dot\varphi^1,\ldots,\dot\varphi^N) \in \mathcal{A}_\rho^N$ and $\varphi=(\varphi^1,\ldots,\varphi^N) \in \mathcal{A}_\rho^N$, where $\mathcal{A}_{\rho}$ denotes the progressively measurable processes $(X_t)_{t \geq 0}$ that satisfy $\mathbb{E}\left[ \int_0^\infty e^{-\rho t} X_t^2 dt \right] <\infty$ and $\mathcal{A}_\rho^N$ is the space of $N$-dimensional vectors of such processes.

\subsection{Nash Equilibrium}

Due to their joint impact on the execution price~\eqref{def:S} of the risky asset, the agents' optimization problems~\eqref{def:FPGobjective} are intertwined. Their interaction accordingly has to be studied in a game-theoretic fashion:

\begin{definition}
Admissible trading rates $\dot{\varphi}=(\dot\varphi^1,\ldots,\dot\varphi^N)$ form a \emph{Nash equilibrium} if, given the trading rates $\dot\varphi^{-n}$ of the other agents, no agent $n=1,\ldots,N$ has an incentive to deviate from the equilibrium because $\dot{\varphi}^n$ already maximizes the corresponding goal functional~\eqref{def:FPGobjective}.
\end{definition} 

In dynamic models, different notions of Nash equilibria arise depending on whether agents anticipate that others may react if they deviate from the equilibrium. In the present context, a number of recent papers~\citep{voss.19,CasgrainJaimungal:20,N-V-2021} study \emph{open-loop} Nash equilibria, where each agent treats the others' trading rates as fixed stochastic processes. This means that agents do not consider at all how the others may react if they were to deviate from the equilibrium:

\begin{definition}[Open-Loop Nash Equilibrium] \label{def:Nash}
A collection of admissible trading rates $\dot\varphi=(\dot\varphi^1,\ldots,\dot\varphi^N)$ is an \emph{open-loop Nash equilibrium} if no agent $n=1,\ldots, N$ has an incentive to deviate from the equilibrium, in that
\begin{equation*}
    J^n(\dot{\varphi}^n,\dot\varphi^{-n}) \geq J^{n}(\dot\psi^n,\dot\varphi^{-n}),
\end{equation*}
for all alternative admissible trading rates $\dot{\psi}^n \in \mathcal{A}_\rho$ of agent $n$.
\end{definition}

In this paper, we extend this analysis by solving for a \emph{closed-loop} Nash equilibrium in feedback form~\citep[Chapter 5.1.2]{Carmona:16}.  To wit, each agent considers the others' feedback controls to be fixed, but takes into account how their own deviations from the equilibrium impact others' trading through their effect on the model's state variables.\footnote{The even more sophisticated notion of ``subgame perfect Nash equilibria'' is studied by~\cite*{chen.al.21} in a model where agents' holdings affect the expected returns of the risky asset. This means that even the feedback controls of the other agents are not fixed out of equilibrium, but have to solve an appropriate optimization problem, too. For the present model with instantaneous trading costs, this form of interaction does not appear to be tractable.} In the present context, single-agent optimal trading rates~\citep{GARLEANU16} and open-loop Nash equilibria~\citep{voss.19} are functions of the (exogenous) trading signal $\mu_t$ and the (endogenous) risky positions $\varphi_t=(\varphi^1_t,\ldots,\varphi^N_t)$ of the agents. We therefore naturally search for a closed-loop equilibrium in the same class of strategies. If agents unilaterally deviate from the equilibrium, this changes their positions and then in turn feeds back into the trades of the other agents.

Formally, this leads to the following notion of Nash equilibrium.

\begin{definition}[(Admissible) Feedback Controls] \label{ad-control} 
A \emph{feedback trading rate} is a function $\dot\varphi : \mathbb{R}^{1+N} \to \mathbb{R}^N$ that maps the current value of the signal $\mu_t$ and all agents' risky positions $\varphi_t=(\varphi^1_t,\ldots,\varphi^N_t)$  to the rates at which the agents' risky positions are adjusted. A trading rate is \emph{admissible} if the $\mathbb{R}^N$-valued system of controlled positions
$$
d\varphi_t = \dot\varphi(\mu_t,\varphi_t)dt
$$
has a unique solution for which $\dot\varphi=(\dot\varphi(\mu_t,\varphi_t))_{t \geq 0}$ and $\varphi$ belongs to $\mathcal{A}_\rho^N$.
\end{definition}

\begin{definition}[Closed-Loop Nash Equilibrium] \label{cl-equi} 
An admissible feedback control is a \emph{closed-loop Nash equilibrium}, if no agent $n=1,\ldots,N$ has an incentive to deviate from the equilibrium in that 
$$
J^n( \dot\varphi^n; \dot\varphi^{-n}) \geq J^n( \dot\psi^n; \dot\varphi^{-n}),
$$
for all admissible feedback trading rates $\dot{\psi}= (\dot\varphi^{1},\ldots, \dot\varphi^{n-1}, \dot{\psi}^n, \dot\varphi^{n+1}, \ldots, \dot\varphi^{N})$ where the other agents' feedback controls remain fixed.
\end{definition}


\section{Main Results}\label{sec-results} 

We are now ready to state our main result. It shows that, for sufficiently small price impact $\lambda$, there exists a symmetric closed-loop Nash equilibrium. The corresponding equilibrium trading rate is linear with respect to the signal $\mu_t$ and the agents inventories $\varphi_t=(\varphi^1_t,\ldots,\varphi^N_t)$. More specifically, it tracks an ``aim portfolio'' (a constant multiple $M_{aim}$ of the frictionless optimal holdings $\mu_t/\gamma$) with a constant (relative) trading rate $M_{rate}$. This parallels results for single-agent models~\citep{GARLEANU16}, a central planner (Section~\ref{ss:planner}), or open-loop equilibria (\cite{voss.19,CasgrainJaimungal:20} or Section~\ref{sec-open}). However the coefficients as well as the corresponding optimal value all depend on the form of the agents' strategic interaction. We discuss this in more detail in Section~\ref{ss:comparison} below. 

\begin{theorem} [Closed-Loop Nash Equilibrium]   \label{thm-nash-closed-loop}
For sufficiently small $\lambda$, there exists a symmetric closed-loop equilibrium, where 
 \begin{equation}
\label{eq-nash-control}
\dot{\varphi}^{n}_{t} =M_{rate}\left(M_{aim} \frac{\mu_t}{\gamma} -\varphi^n_t\right), \quad  n=1,\ldots,N,
\end{equation}
for some positive constants $M_{rate}, M_{aim}$. For initial positions $\varphi^{i}_{0}=0$, $i=1,...,N$ and a zero initial signal $(\mu_0=0)$, the corresponding equilibrium value of the agents' goal functionals~\eqref{def:FPGobjective} is
\begin{equation} \label{val-cl-f} 
\begin{aligned}
J^{n}(\dot\varphi^n;\dot\varphi^{-n}) =&\left(1+2\lambda N M_{rate}^{2}\frac{M_{aim}}{\gamma}\right)\frac{M_{rate} M_{aim}}{\gamma}\frac{ \sigma ^2}{\rho 
   (2 \beta +\rho ) (\beta +\rho +M_{rate} )} \\
   &\quad -\frac{1}{\rho  (2 \beta +\rho )}\left( \frac{\sigma M_{rate} M_{aim}}{\gamma}\right)^{2} \left(\lambda N 
   +\frac{\gamma+2\lambda N M_{rate}^{2}
   }{
   (\rho +2 M_{rate} ) (\beta +\rho +M_{rate} )} \right).
\end{aligned}
\end{equation}
\end{theorem}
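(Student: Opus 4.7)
The plan is to attack Theorem \ref{thm-nash-closed-loop} by dynamic programming: I would first derive the Hamilton--Jacobi--Bellman characterisation that a symmetric closed-loop equilibrium must satisfy, reduce it to a finite-dimensional nonlinear algebraic system via a quadratic ansatz and symmetry, establish solvability for small $\lambda$, and close with a verification argument. Fixing the feedback trading rates $\dot\varphi^i(\mu,\varphi)$ of all agents $i\neq n$, agent $n$'s value function $V^n(\mu,\varphi)$ should satisfy
\begin{align*}
\rho V^n = \sup_{u\in\mathbb R}\Big\{\mu\varphi^n-\tfrac{\gamma}{2}(\varphi^n)^2-\lambda u\bigl(u+\textstyle\sum_{i\neq n}\dot\varphi^i\bigr)-\beta\mu\,\partial_\mu V^n+\tfrac{\sigma^2}{2}\partial_{\mu\mu}V^n+u\,\partial_{\varphi^n}V^n+\textstyle\sum_{i\neq n}\dot\varphi^i\,\partial_{\varphi^i}V^n\Big\},
\end{align*}
whose first-order condition yields $u^*=\tfrac{1}{2\lambda}\bigl(\partial_{\varphi^n}V^n-\lambda\sum_{i\neq n}\dot\varphi^i\bigr)$. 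I would then take $V^n$ to be a quadratic form in $(\mu,\varphi^1,\ldots,\varphi^N)$ whose coefficients are invariant under permutations of the indices $\{i:i\neq n\}$; this collapses $V^n$ to a handful of scalar unknowns (the coefficients of $1,\mu^2,\mu\varphi^n,\mu\sum_{i\neq n}\varphi^i,(\varphi^n)^2,\varphi^n\sum_{i\neq n}\varphi^i,(\sum_{i\neq n}\varphi^i)^2,\sum_{i\neq n}(\varphi^i)^2$) and forces the resulting feedback to have precisely the form $M_{rate}(M_{aim}\mu/\gamma-\varphi^n)$ claimed in \eqref{eq-nash-control}.

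Substituting this ansatz and $u^*$ back into the HJB, matching coefficients of each monomial, and imposing self-consistency (agent $n$'s best response must coincide with the feedback postulated for the other agents) yields a closed nonlinear algebraic system whose unknowns include $M_{rate}, M_{aim}$ and the coefficients of $V^n$. Because single-agent intuition \citep{GARLEANU16} suggests $M_{rate}\sim 1/\sqrt\lambda$ as $\lambda\downarrow 0$, the raw system is singular in this limit, so I would rescale the unknowns (for instance $m:=\sqrt\lambda\,M_{rate}$ together with matching rescalings of the coefficients of $V^n$) to obtain a system that extends smoothly to $\lambda=0$. At that boundary the reduced system should admit a positive root at which the Jacobian with respect to the rescaled unknowns is nonsingular; the implicit function theorem then produces a unique smooth branch of solutions on some interval $(0,\lambda_0)$, with strict positivity of $M_{rate}$ and $M_{aim}$ inherited from the leading order. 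I expect this to be the main obstacle: unlike in the open-loop or single-agent cases, the $\lambda=0$ limit is \emph{not} explicitly solvable in closed form, so the existence, isolation, and positivity of the limiting root, together with the nondegeneracy of its Jacobian, must be extracted from structural monotonicity and sign properties of the coefficients — presumably the ``implicit properties of the solution'' alluded to in the introduction.

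Given the positive scalar data from the previous step, the last step is a verification argument. Admissibility of the proposed feedback is immediate: under $\dot\varphi^n_t=M_{rate}(M_{aim}\mu_t/\gamma-\varphi^n_t)$ the pair $(\mu_t,\varphi_t)$ is a linear Gaussian Ornstein--Uhlenbeck system, so both the $\mathcal A_\rho$ integrability and the transversality $e^{-\rho t}\E[V^n(\mu_t,\varphi_t)]\to 0$ follow from explicit moment bounds. For an arbitrary admissible deviation $\dot\psi^n$ with resulting inventory $\psi_t$, applying It\^o's formula to $e^{-\rho t}V^n(\mu_t,\psi_t,\varphi^{-n}_t)$ and using that $u^*$ attains the supremum in the HJB gives the Nash inequality $J^n(\dot\psi^n;\dot\varphi^{-n})\le V^n(0,\mathbf 0)=J^n(\dot\varphi^n;\dot\varphi^{-n})$, which also pins down uniqueness within the symmetric linear class. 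Finally, the closed-form expression \eqref{val-cl-f} at $\mu_0=0$ can be read off as the constant term of $V^n$, which is itself determined by the remaining coefficients together with the stationary variance $\sigma^2/(2\beta)$ of $\mu$, or obtained equivalently by evaluating the integral in \eqref{def:FPGobjective} via the transient second moments of the linear Gaussian pair driven by the equilibrium dynamics.
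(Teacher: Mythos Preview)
Your proposal follows essentially the same architecture as the paper: HJB equation for one agent with the others' linear feedback fixed, quadratic value-function ansatz exploiting permutation symmetry, reduction to a finite algebraic system, rescaling by $\sqrt\lambda$, implicit function theorem at $\lambda=0$, and a standard verification argument (admissibility, martingale property, transversality) followed by a direct computation of the value. The paper collapses the state to three scalars $(x,y,m)=(\varphi^n,\bar\varphi^{-n},\mu)$ rather than keeping the full vector $(\mu,\varphi^1,\ldots,\varphi^N)$, but this is equivalent to your symmetry reduction.

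The one place where the paper proceeds differently from what you sketch is precisely the step you flag as the main obstacle. You propose to check nonsingularity of the full Jacobian at $\lambda=0$ and to extract positivity and isolation of the limiting root from ``structural monotonicity and sign properties''. The paper does \emph{not} do this: instead it eliminates variables sequentially until only a single scalar equation $\Phi_N(\lambda,\delta)=0$ remains (with $d=\sqrt{\gamma\lambda}\,\delta$), observes that any root of $\Phi_N(0,\cdot)$ is a root of an explicit cubic in $\delta^2$ depending only on $N$, and then relies on symbolic computation (a Mathematica companion) to verify that exactly one of the three Cardano roots satisfies $\Phi_N(0,\cdot)=0$ and that $\partial_\delta\Phi_N(0,\delta^*_N)\neq 0$. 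No monotonicity argument is given. Similarly, the sign conditions needed for admissibility ($\bar a>0$, $\bar c-(N-1)\bar b>0$, and hence $M_{rate},M_{aim}>0$) are checked by expanding the sequentially eliminated expressions in $\sqrt\lambda$ and using further implicit identities for $\delta^*_N$, again partly via symbolic computation. So your plan is right in spirit, but you should expect to carry out an explicit reduction to a scalar equation and a polynomial root analysis rather than an abstract Jacobian/monotonicity argument.

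A minor point on \eqref{val-cl-f}: the paper does not read it off as the constant term $g$ of $V$ (which equals $\sigma^2 c/(2\rho)$ and would require unwinding $c$ in terms of $M_{rate},M_{aim}$). It instead evaluates $J^n$ directly by computing the three elementary integrals $\E\int_0^\infty e^{-\rho t}\mu_t\varphi_t\,dt$, $\E\int_0^\infty e^{-\rho t}\varphi_t^2\,dt$, $\E\int_0^\infty e^{-\rho t}\mu_t^2\,dt$ from the explicit Ornstein--Uhlenbeck covariance, which is your second suggested route.
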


For better readability, the lengthy proof of Theorem \ref{thm-nash-closed-loop} is deferred to Section \ref{sec-pf-cl}.  
\begin{remark} 
The coefficients $M_{rate}$ and $M_{aim}$ in Theorem \ref{thm-nash-closed-loop} are given by \eqref{eq:varconst} and \eqref{n3} in terms of the solution to the system of nonlinear equations \eqref{eq-a-system}--\eqref{eq-f-system}, \eqref{eq-a-bar-system}--\eqref{eq-c-bar-system}, which can be determined via the scalar equation~\eqref{phi-N-lambda-h}. See Section \ref{s:heuristics} for more details. By the implicit function theorem, this solution is in fact unique for sufficiently small $\lambda$, see Proposition~\ref{prop-asymptotics-ift} for more details.  
\end{remark} 
    
The constants in Theorem \ref{thm-nash-closed-loop} are determined by the root of the scalar equation \eqref{phi-N-lambda-h}, which can readily be solved numerically. However, in order to disentangle the effects of holding costs $\gamma$, trading costs $\lambda$, and competition between $N$ agents, it is also instructive to expand the solution for small $\lambda$.  The proof of this result is again deferred for better readability, see Section~\ref{sec-cl-asymp}. 

\begin{proposition}
\label{prop-asymptotic-cl}
For small price impact $\lambda \to 0$, 
\begin{equation}\label{eq-r-alpha-CL}
\begin{aligned}
M_{rate} &= \sqrt{\frac{\gamma}{\lambda}}\Delta(N)+ \mathcal{O}\left(1\right), \quad  M_{aim} = 1+ \mathcal{O}(\sqrt{\lambda}),
\end{aligned}
\end{equation}
for a nonnegative function $\Delta(N)$ that only depends on the number of agents $N$, cf.~\eqref{delta-n}. For initial positions $\varphi^{i}_{0}=0$, $i=1,...,N$ and a zero initial signal $(\mu_0=0)$, the corresponding equilibrium value of the agents' goal functional satisfies
\begin{equation}\label{eq:valuecl}
J^{n}(\dot\varphi^n; \dot\varphi^{-n}) =\frac{\sigma^2}{
 2\rho\gamma(2  \beta  +    \rho)}-\frac{\sqrt{\lambda}\sigma^2 (1 + 2 \Delta(N)^2 N) }{
  4\gamma^{3/2}\rho \Delta(N)  }  + \mathcal{O}\left(\lambda\right).
\end{equation}
\end{proposition}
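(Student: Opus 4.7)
The plan is to reduce the proposition to an expansion of the scalar equation~\eqref{phi-N-lambda-h} in the small parameter $\lambda$ and then to substitute the resulting expansions into the closed-form value~\eqref{val-cl-f}. I would first determine the correct leading-order ansatz for $M_{rate}$ by a dominant-balance argument. The analogous single-agent formula of~\cite{GARLEANU16} gives a trading speed proportional to $\sqrt{\gamma/\lambda}$, so I expect the $N$-player closed-loop version to retain this scaling with an $N$-dependent prefactor, i.e.\ $M_{rate} = \sqrt{\gamma/\lambda}\,\Delta(N) + O(1)$. Substituting this ansatz into~\eqref{phi-N-lambda-h} and isolating the leading power of $\lambda^{-1/2}$ should collapse the transcendental equation to the explicit algebraic relation~\eqref{delta-n} defining $\Delta(N)$. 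For $M_{aim}$, the frictionless limit recovers the Merton-type target $\mu_t/\gamma$, which heuristically corresponds to $M_{aim}=1$; inserting the ansatz $M_{aim} = 1 + m_1\sqrt{\lambda} + O(\lambda)$ into the constitutive relation~\eqref{n3} and matching the next order should fix $m_1$ (or show it is merely $O(1)$), which is all that~\eqref{eq-r-alpha-CL} requires.

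With the expansions of $M_{rate}$ and $M_{aim}$ in hand, I would plug them into~\eqref{val-cl-f}. The key simplifications are
\begin{equation*}
\lambda N M_{rate}^{2}/\gamma = N\Delta(N)^{2}+O(\sqrt{\lambda}), \qquad \frac{M_{rate}}{\beta+\rho+M_{rate}} = 1-\frac{\beta+\rho}{M_{rate}}+O(1/M_{rate}^{2}),
\end{equation*}
together with the analogous expansions of $1/(\rho+2M_{rate})$ and of the combined ratio $(\gamma+2\lambda N M_{rate}^{2})/((\rho+2M_{rate})(\beta+\rho+M_{rate}))$. At leading order the two bracketed pieces of~\eqref{val-cl-f} must combine as
\begin{equation*}
\bigl(1+2N\Delta(N)^{2}\bigr)\frac{\sigma^{2}}{\gamma\rho(2\beta+\rho)}-\bigl(2N\Delta(N)^{2}+\tfrac12\bigr)\frac{\sigma^{2}}{\gamma\rho(2\beta+\rho)} = \frac{\sigma^{2}}{2\gamma\rho(2\beta+\rho)},
\end{equation*}
reproducing the frictionless optimal value (which also follows directly from $E\!\left[\int_{0}^{\infty}e^{-\rho t}\mu_{t}^{2}/(2\gamma)\,dt\right] = \sigma^{2}/(2\gamma\rho(2\beta+\rho))$ under the dynamics~\eqref{eq-ornstein-uhlenbeck} with $\mu_0=0$). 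The $O(\sqrt{\lambda})$ correction is then collected from the subleading contributions of $1/(\beta+\rho+M_{rate})$, $1/(\rho+2M_{rate})$ and $M_{aim}$; grouping these pieces should yield the coefficient $-(1+2N\Delta(N)^{2})/(4\gamma^{3/2}\rho\Delta(N))$ appearing in~\eqref{eq:valuecl}.

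The main obstacle I anticipate is bookkeeping at order $\sqrt{\lambda}$: the $O(1)$ cancellation above is delicate, and any algebraic error at an intermediate step would propagate into a spurious $\sqrt{\lambda}$ term. To contain this risk I would first recast~\eqref{val-cl-f} in terms of the two dimensionless quantities $x:=\lambda N M_{rate}^{2}/\gamma$ (which is $O(1)$) and $y:=(\beta+\rho)/M_{rate}$ (which is $O(\sqrt{\lambda})$), expand formally in $y$ while keeping $x$ exact, and only at the very end substitute the expansions of $x$ and $y$ obtained from~\eqref{phi-N-lambda-h} and~\eqref{delta-n}. This isolates all the required cancellations to a one-variable Taylor series around $y=0$, which is both transparent and straightforward to verify.
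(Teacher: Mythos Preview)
Your plan for the value-function expansion is essentially the paper's approach: decompose~\eqref{val-cl-f} into a small number of pieces, Taylor-expand each in the small variables $\lambda N M_{rate}^2/\gamma$ and $1/M_{rate}$, and collect the $O(1)$ and $O(\sqrt{\lambda})$ contributions. That part is fine.

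The genuine gap is in your treatment of $M_{aim} = 1 + O(\sqrt{\lambda})$. You write that the frictionless limit ``heuristically corresponds to $M_{aim}=1$'' and propose to insert the ansatz into~\eqref{n3} and match orders. But~\eqref{n3} is merely the definition $\bar a = M_{rate}M_{aim}/\gamma$; there is nothing to match against. To determine $M_{aim}$ you must first compute $\bar a = e/((N+1)\lambda)$, and $e$ is obtained from the coupled linear system~\eqref{eq-e-system}--\eqref{eq-f-system} \emph{after} all the other coefficients have been expressed in terms of the cubic root $\delta_N^*$. Working this through yields $e = 1/h_4$ with $h_4 = \sqrt{\gamma/\lambda}\,h_4^0 + O(1)$ for a complicated explicit function $h_4^0$ of $\delta_N^*$, so that
\[
M_{aim} \;=\; \frac{\gamma}{(N+1)\lambda\, h_4\, M_{rate}} \;=\; \frac{1}{(N+1)\,h_4^0\,\Delta(N)} + O(\sqrt{\lambda}).
\]
The statement $M_{aim}\to 1$ is therefore equivalent to the algebraic identity $h_4^0 = 1/((N+1)\Delta(N))$. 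This identity is \emph{not} automatic: both sides are nontrivial functions of $\delta_N^*$, and the equality holds precisely because $\delta_N^*$ is a root of the specific cubic $P_N$ from~\eqref{eq-polynomial}. The paper proves this separately (Lemma~\ref{lemma-identities-h04-h14}) and explicitly flags it as ``highly nontrivial'' in Remark~\ref{algebra-comment}. Without this identity, your leading-order cancellation in the value function would not even go through, since every term there carries a factor of $M_{aim}$ or $M_{aim}^2$.

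A smaller imprecision: you propose to substitute the ansatz for $M_{rate}$ directly into~\eqref{phi-N-lambda-h} to recover~\eqref{delta-n}. But~\eqref{phi-N-lambda-h} is an equation for the auxiliary variable $\delta_N$, not for $M_{rate}$; the formula~\eqref{delta-n} for $\Delta(N)$ arises only after pushing the expansion $\delta_N(\lambda)=\delta_N^*+O(\lambda)$ through the chain $d\to a\to(\bar b,\bar c)\to M_{rate}=\bar c-(N-1)\bar b$. The paper carries this out in~\eqref{eq-d-expansion}--\eqref{constrnt-eq}.
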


Here, the function $\Delta(N)$ is determined by the unique root of the limiting version~\eqref{phi-N-lambda-limit} of the scalar equation~\eqref{phi-N-lambda-h}, which only depends on $N$ but not the other model parameters.

\section{Comparison to Other Forms of Interactions} \label{sec-comp} 

We now discuss some of the quantitative and qualitative implications of Theorem~\ref{thm-nash-closed-loop} and Proposition~\ref{prop-asymptotic-cl}. More specifically, we consider how the equilibrium trading rates and their performances depend on the number of agents and the nature of their strategic interaction.

\subsection{Central Planner}\label{ss:planner}

To better understand the closed-loop Nash equilibrium, one natural reference point is the case where the agents cooperate perfectly, in that a ``central planner'' chooses all of their controls $\dot\varphi=(\dot\varphi^1,\ldots,\dot\varphi^N)$ simultaneously in order to maximize the agents' average welfare:
\begin{equation}\label{eq-social-planner-obj}
\bar{J}(\dot\varphi) =\frac{1}{N} \sum_{n=1}^{N}J^{n}(\dot\varphi^n;\dot\varphi^{-n}).
\end{equation}
By symmetry, maximizing the average welfare of the agents is equivalent to a single agent problem with price impact parameter $N\lambda$.\footnote{Put differently, in the ``mean-field scaling'' where each agent has mass $1/N$ so that the total mass of agents does not depend on $N$, the central-planner problem is exactly the same as each single agent's.}
 Up to this adjustment the analysis is in complete analogy to~\cite{GARLEANU16}. For easy reference, Section~S1 of the supplementary appendix contains a concise derivation of these results.
  
\paragraph{Optimum}

The optimal trading rates chosen by the central planner are 
\begin{equation*} 
\dot\varphi^n_t =M^{\text{CP}}_{rate}\left(M_{aim}^{\text{CP}} \frac{\mu_t}{\gamma} - \varphi^n_t\right),  \quad n=1,...,N, 
\end{equation*}
where 
\begin{equation*}
M^{\text{CP}}_{rate}=\sqrt{\frac{\gamma}{2N\lambda} + \frac{\rho^{2}}{4}} - \frac{\rho}{2} \in (0,\infty), \quad M^{\text{CP}}_{aim}=  \frac{\sqrt{\frac{\gamma}{2N\lambda} + \frac{\rho^{2}}{4}} + \frac{\rho}{2}}{\sqrt{\frac{\gamma}{2N\lambda} + \frac{\rho^{2}}{4}}+ \frac{\rho}{2}+\beta} \in (0,1).
\end{equation*}
For zero initial positions of all agents and a zero initial signal, the corresponding optimal performance is 
\begin{equation*} 
\bar{J}(\dot\varphi)=  \frac{\sigma^{2}}{2\rho}\frac{1}{2N\lambda (\rho + 2\beta)}\left(\frac{\rho}{2}+ \beta  + \sqrt{ \frac{\gamma}{2N\lambda} + \frac{\rho^2}{4}}\right)^{-2}.
\end{equation*}

\paragraph{Asymptotics}

For small price impact $\lambda\to 0$, Taylor-expansion of these explicit formulas shows
\begin{equation}\label{eq:rateSP}
M^{\text{CP}}_{rate}=\sqrt{\frac{\gamma}{2N\lambda} } + \mathcal{O}\left(1\right) , \quad M_{aim}^{\text{CP}} = 1  +\mathcal{O}(\lambda),
\end{equation}
and
\begin{equation*}
\begin{aligned}
\bar{J}(\dot\varphi) &= \frac{1}{2\gamma}\frac{\sigma^2}{ \rho(2\beta+\rho)}-\frac{\sigma^2}{2\rho}\frac{\sqrt{2N}}{\gamma^{3/2}}\sqrt{\lambda}+\mathcal{O}\left(\lambda\right).
\end{aligned}
\end{equation*}

\subsection{Open-Loop Nash Equilibrium} \label{sec-open} 

Another important benchmark is the case of \emph{open-loop} Nash competition, studied by~\cite{voss.19,DrapeauLuoSchiedXiong:19,CasgrainJaimungal:20,N-V-2021}.

\paragraph{Equilibrium}

For an open-loop equilibrium, each agent's optimality condition can be derived in analogy to the single agent case by treating the other agents' trading rates as fixed. The resulting problem is analogous to the single-agent infinite time horizon model studied by \citep{Bouchard2018} and to the two agents finite time horizon model developed in ~\citep{voss.19}.  In the transaction cost term, the corresponding first-order condition then leads to an integral of $\lambda(2\dot\varphi^1_t+\sum_{n=2}^N \dot\varphi^n_t)$ rather than $\lambda 2\dot\varphi^1_t$ in the single-agent model. (Full details on the derivation of the open-loop equilibrium are provided in Section S2 of the supplementary appendix.) In the symmetric case where all agents are identical and their trading strategies must therefore be the same, it follows that the open-loop equilibrium is again of the same form as the single-agent model, up to replacing the constant $2$ in the single-agent results with $N+1$ rather than with $2N$ as in the central-planner model throughout. 

More specifically, in the unique open-loop Nash equilibrium each agent's optimal trading speed is given by 
\begin{equation*}
\dot\varphi^n_t = M^{\text{OL}}_{rate}\left(M^{\text{OL}}_{aim}\frac{\mu_{t}}{\gamma} - \varphi^n_t\right), \quad n=1,...,N,
 \end{equation*}
for
\begin{equation*}
 M^{\text{OL}}_{rate}= \sqrt{\frac{\gamma}{(N+1)\lambda} + \frac{\rho^{2}}{4}} - \frac{\rho}{2} \in (0,\infty), \qquad M^{\text{OL}}_{aim}= \frac{ \sqrt{\frac{\gamma}{(N+1)\lambda} + \frac{\rho^{2}}{4}} + \frac{\rho}{2}}{\sqrt{\frac{\gamma}{(N+1)\lambda} + \frac{\rho^{2}}{4}} +\frac{\rho}{2}+ \beta} \in (0,1).
\end{equation*}
For zero initial positions of all agents and a zero initial signal, the agents' corresponding common optimal value is
\begin{equation*}
\begin{aligned}
J_{\text{OL}}^{n}(\dot\varphi^n; \dot\varphi^{-n}) 
&=\left(1+2\lambda N( M^{\text{OL}}_{rate})^{2}\frac{ M^{\text{OL}}_{aim}}{\gamma}\right)\frac{ M^{\text{OL}}_{rate} M^{\text{OL}}_{aim}}{\gamma}\frac{ \sigma ^2}{\rho 
   (2 \beta +\rho ) (\beta +\rho + M^{\text{OL}}_{rate})} \\
   &\quad -\frac{1}{\rho  (2 \beta +\rho )}\left( \frac{\sigma  M^{\text{OL}}_{rate} M^{\text{OL}}_{aim}}{\gamma}\right)^{2} \left(\lambda N 
   +\frac{\gamma+2\lambda N\left( M^{\text{OL}}_{rate}\right)^{2}
   }{
   (\rho +2  M^{\text{OL}}_{rate}) (\beta +\rho + M^{\text{OL}}_{rate})} \right).
\end{aligned}
\end{equation*}
 
\paragraph{Asymptotics}

For small price impact $\lambda \to 0$, Taylor expansion of these explicit formulas shows that
\begin{equation}\label{eq:rateOL}
M^{\text{OL}}_{rate} =   \sqrt{\frac{\gamma}{(N+1)\lambda} } + \mathcal{O}\left(1\right), \quad  M^{\text{OL}}_{aim} =1 +\mathcal{O}\left(\lambda\right),
\end{equation}\
and
\begin{equation*}
\begin{aligned}
J_{\text{OL}}^{n}(\dot\varphi^n;\dot\varphi^{-n}) &= \frac{1}{2\gamma}\frac{\sigma^2}{ \rho(2\beta+\rho)} -\frac{\sigma^{2}(1+3N)}{4\rho\gamma^{3/2}\sqrt{1+N}} \sqrt{\lambda}+\mathcal{O}\left(\lambda\right).
\end{aligned}
\end{equation*}

\subsection{Comparison}\label{ss:comparison}

We now compare the equilibrium trading strategies and their performance in the closed-loop equilibrium, the open-loop equilibrium, and the central-planner solution. To focus on the impact of the agents' Nash competition rather than changes in overall risk-bearing capacity, we consider throughout the ``mean-field scaling'' where each agent has mass $1/N$, so that the total mass of agents in the economy does not change with $N$. The price impact parameter $\lambda$ is then replaced by $\lambda/N$ in the formulas above. 

\paragraph{Trading Rates}

Comparison of the asymptotic trading rates~\eqref{eq-r-alpha-CL}, \eqref{eq:rateSP}, and~\eqref{eq:rateOL} shows that, in each case, the equilibrium trading rates for small price impact track the frictionless optimal holdings $\mu_t/\gamma$, in parallel to general results for single-agent models~\citep{moreau.al.17}. 

The relative trading speed $M_{rate}$ with which this target strategy is tracked also depends in the same way on the ratio $\gamma/\lambda$ of inventory and trading costs, but scales differently with the number $N$ of agents in each model. To wit, the factor multiplying $\sqrt{\gamma/\lambda}$ in the trading rate chosen by the social planner is the constant $1/\sqrt{2}$ -- by symmetry, the social optimum that can be achieved with perfect cooperation evidently does not depend on the number of agents. The respective factor in the corresponding open-loop Nash equilibrium $\sqrt{N/(1+N)}$ is larger. This is an example of the ``tragedy of the commons'' -- without perfect cooperation, the agents overuse the common pool of liquidity available to all of them, because they only internalize the negative impact their trading has on their own execution prices but not on others'.

In the closed-loop equilibrium, the relative trading speed scales with $\Delta(N)\sqrt{N}$. As depicted in Figure~\ref{fig:1}, this multiplier lies between its counterparts for the open-loop equilibrium and the social planner solution for all values of $N$. The intuition is that, with closed-loop controls, agents at least partially slow down their trading when others are trading in the same direction. This reduces the negative externality somewhat and moves the agents closer to the social optimum. However, for all values of $N$ and irrespective of the other model parameters, these asymptotic formulas suggest that the closed-loop equilibrium is a lot closer to its open-loop counterpart than to the social planner model. At least for the symmetric setting with identical agents studied here, this provides compelling evidence for using open-loop models as more tractable approximations of the generally very involved closed-loop analysis.  

\begin{figure}[t]
\begin{center}
\includegraphics[width=0.7\textwidth]{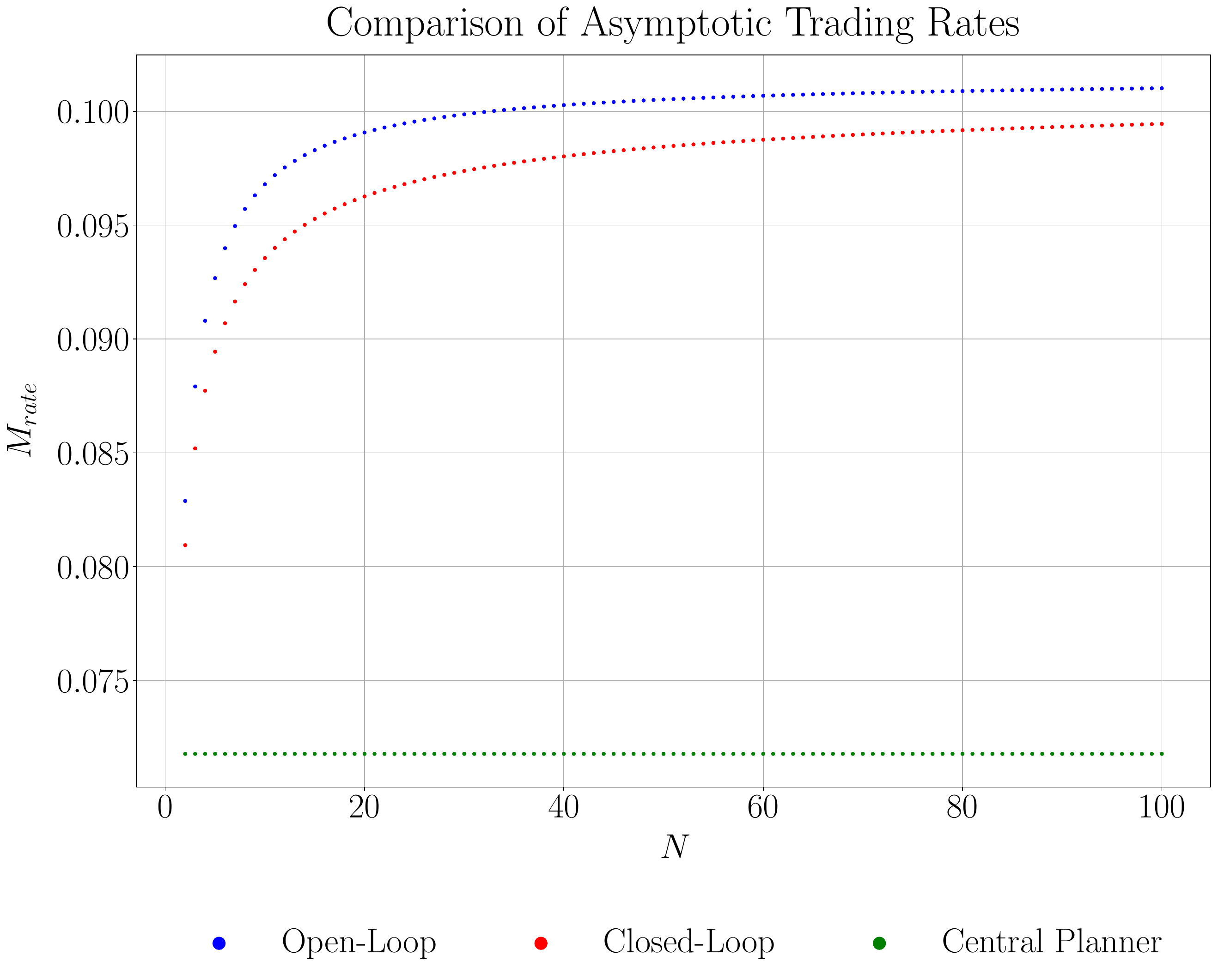}
\end{center}\caption{Leading-order asymptotics of the relative trading speeds in the central planner model (green, lowest curve), in the closed-loop equilibrium (red, middle curve) and in the open-loop equilibrium (blue, top curve) plotted against the number $N$ of agents for the parameters from Table~\ref{table:1}.}
\label{fig:1}
\end{figure}

So far, we have focused on the crisp asymptotic formulas that one obtains in the limit for small price impact $\lambda \downarrow 0$. In this regime, the differences between the central-planner solution, open-loop and closed-loop Nash equilibria only depend on the number of agents, whereas the effect of all other model parameters scales out.  We now assess the accuracy of the asymptotic approximations by comparing them to the numerical solution of the equations describing the exact closed-loop Nash equilibrium. Unlike for the asymptotic formulas, this requires realistic parameter values, both for the predictive signal and the trading and inventory cost parameters. 

To this end, we follow~\cite*{collin.al.20}. For a discrete-time version of the present model, they estimate the quadratic transaction cost parameter $\lambda$ from a proprietary dataset of real transactions executed by a large investment bank. Whereas trading costs and volatilities are constant in our model, these are modulated by a four-state Markov chain in their paper, which also acts as a trading signal by affecting expected returns. In order to translate this to our model, we average volatilities and trading costs against the stationary distribution of the Markov chain, and estimate the parameters of our Ornstein-Uhlenbeck return~\eqref{eq-ornstein-uhlenbeck} from a long simulated time series of the (centered) Markov chain.\footnote{As already mentioned above, we do not incorporate a nonzero mean-reversion level of the signal process here in order not to make the lengthy calculations for the closed-loop model even more involved.} For the inventory cost $\gamma$, we use the medium value from~\cite[Section~5.6]{collin.al.20}. These parameter values are summarized in Table~\ref{table:1}.

\begin{table}[ht]
\caption{(Daily) Model Parameters} 
\centering 
\begin{tabular}{c c} 
\hline\hline 
Parameter & Value \\ [0.5ex]
\hline 
Price volatility $\sigma_{P}$ & 0.0088\\
 Discount rate $\rho$ & 0.00004\\
Signal volatility $\sigma$ & 0.00015\\
Signal mean reversion $\beta$ & 0.070\\
Trading Cost $\lambda$ & $1.88 \times 10^{-10}$ \\ 
Inventory Cost $\gamma$ & $2.5\times 10^{-8}\times \sigma_P^2$\\ 
\hline 
\end{tabular}
\label{table:1} 
\end{table}

Figure~\ref{fig:2} compares the asymptotic approximation of $M_{rate}$ to its exact counterpart, computed numerically by solving the system of algebraic equations \eqref{eq-a-system}--\eqref{eq-f-system}. Even though the size of the portfolios under consideration here is quite large (the frictionless portfolio $\mu_t/\gamma$ has a stationary standard deviation of about two hundred million shares),  
the asymptotic approximation of the relative trading speed turns out to be almost perfect for all values of $N$.

\begin{figure}[ht]
\begin{center}
\includegraphics[width=0.7\textwidth]{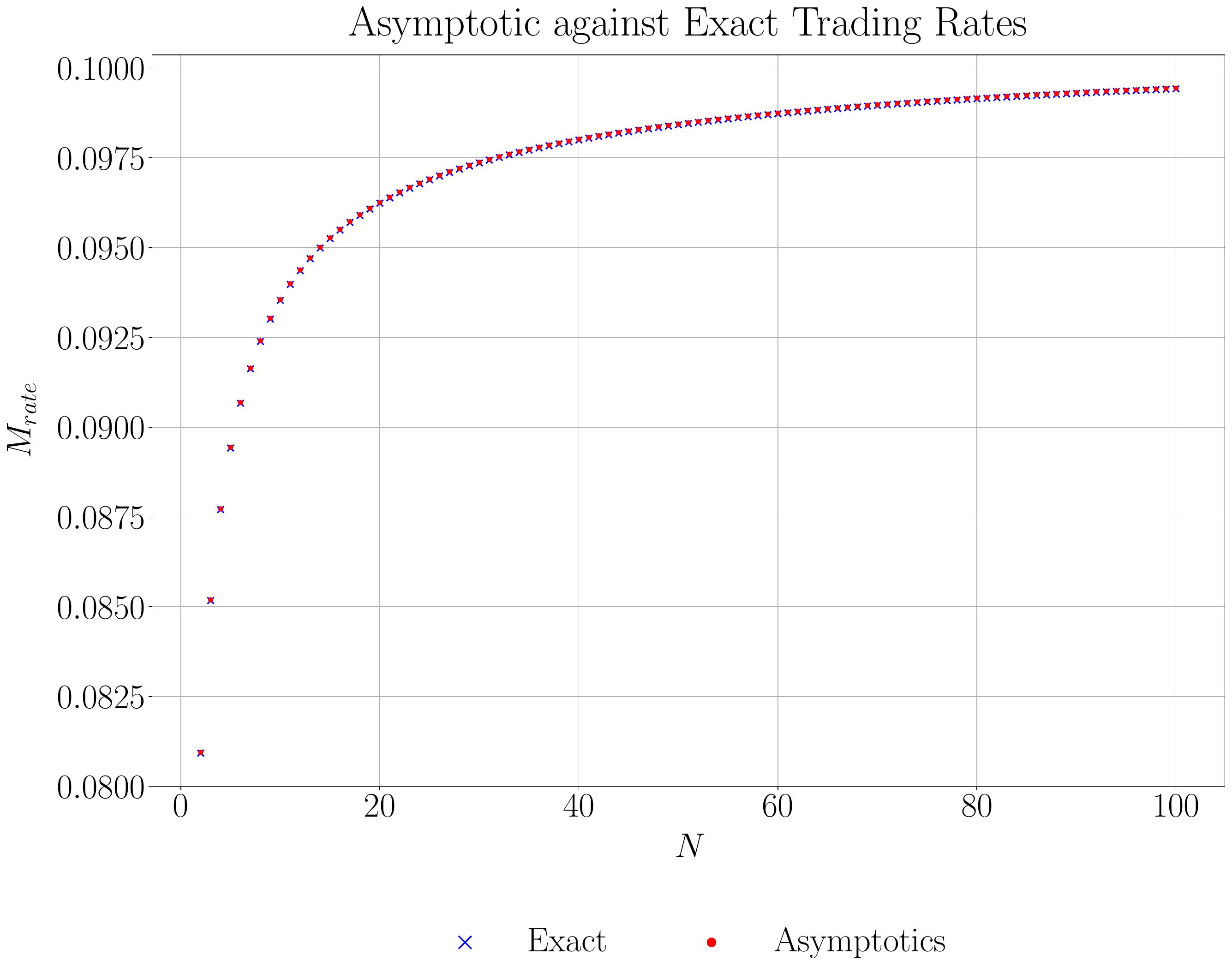}
\end{center}\caption{Relative trading speeds in the closed-loop equilibrium (blue crosses) and its asymptotic approximation (red dots), plotted against the number of agents $N$ for the parameters from Table~\ref{table:1}.}
\label{fig:2}
\end{figure}

\paragraph{Aim Portfolios}

As illustrated by Figure~\ref{fig:3}, this is not the case for the aim portfolio. Indeed, for the large portfolio sizes considered here, the multiplier $M_{aim}$ of the frictionless optimal portfolio is not close to its asymptotic value $1$, but consistently below $60\%$.\footnote{In addition to the size of the portfolios, this is due to the relatively fast mean reversion of the trading signal considered in~\cite{collin.al.20}. The dividend yield used as a predictor in \cite{barberis.00} is much more persistent, for example, so that the leading-order asymptotics are considerably more accurate in this case.} As for the relative trading speeds, the aim portfolio for the closed-loop model lies between its counterparts for the central-planner and open-loop models. The interpretation again is that trading activity is scaled back somewhat towards the social optimum, but still remains much closer to the closed-loop equilibrium for all values of $N$.

\begin{figure}[htbp]
\begin{center}
\includegraphics[width=0.62\textwidth]{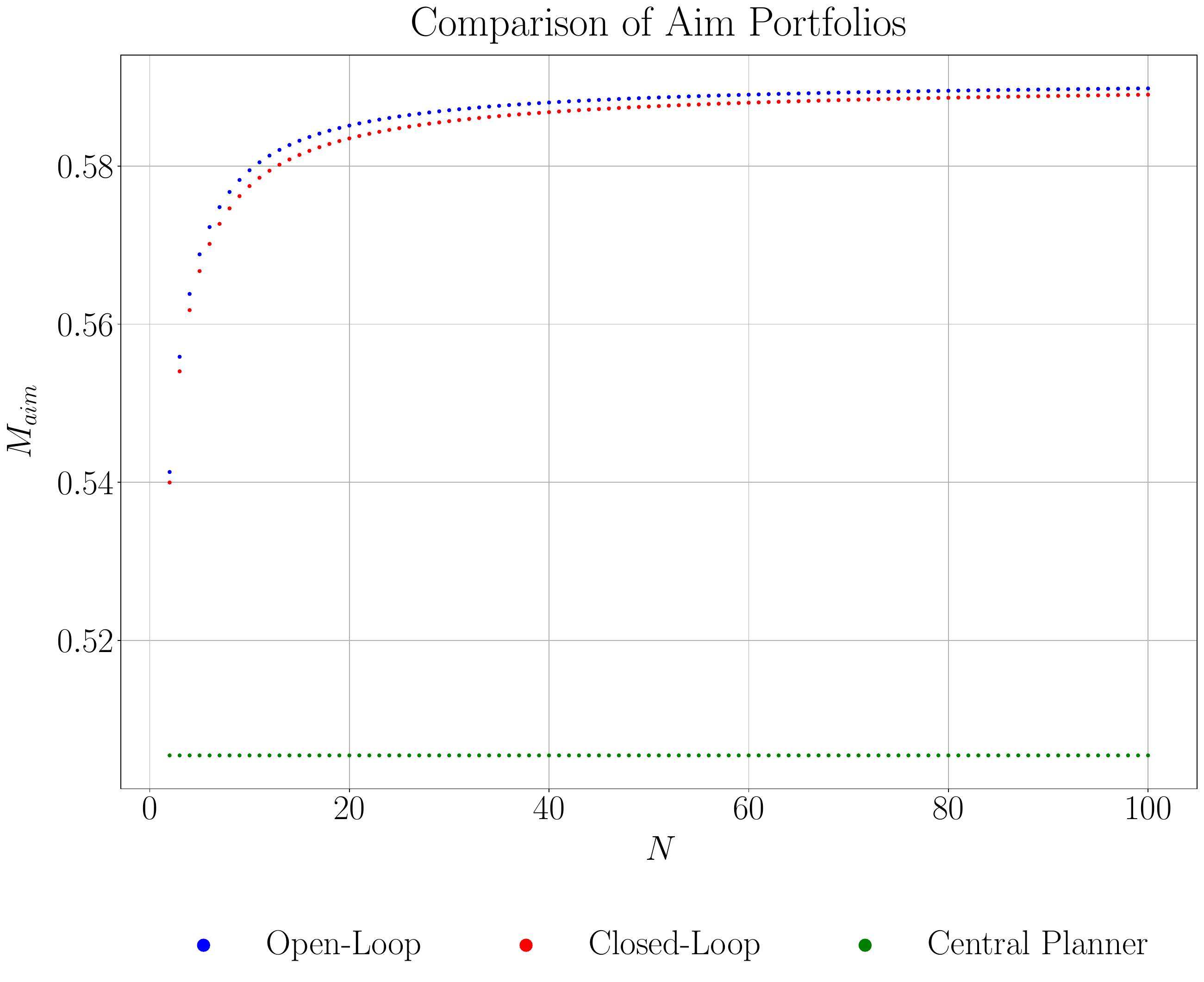}
\end{center}\caption{Multipliers $M_{aim}$ for the aim portfolio $M_{aim} \times \mu_t/\gamma$ in the closed-loop equilibrium (red, middle curve), in the central-planner model $M_{aim}^{\text{CP}}$ (green, lowest curve) and in the open-loop equilibrium, $M^{\text{OL}}_{aim}$ (blue, top curve), plotted against the number of agents $N$ for the parameters from Table~\ref{table:1}.}
\label{fig:3}
\end{figure}

\begin{figure}[htbp]
\begin{center}
\includegraphics[width=0.62\textwidth]{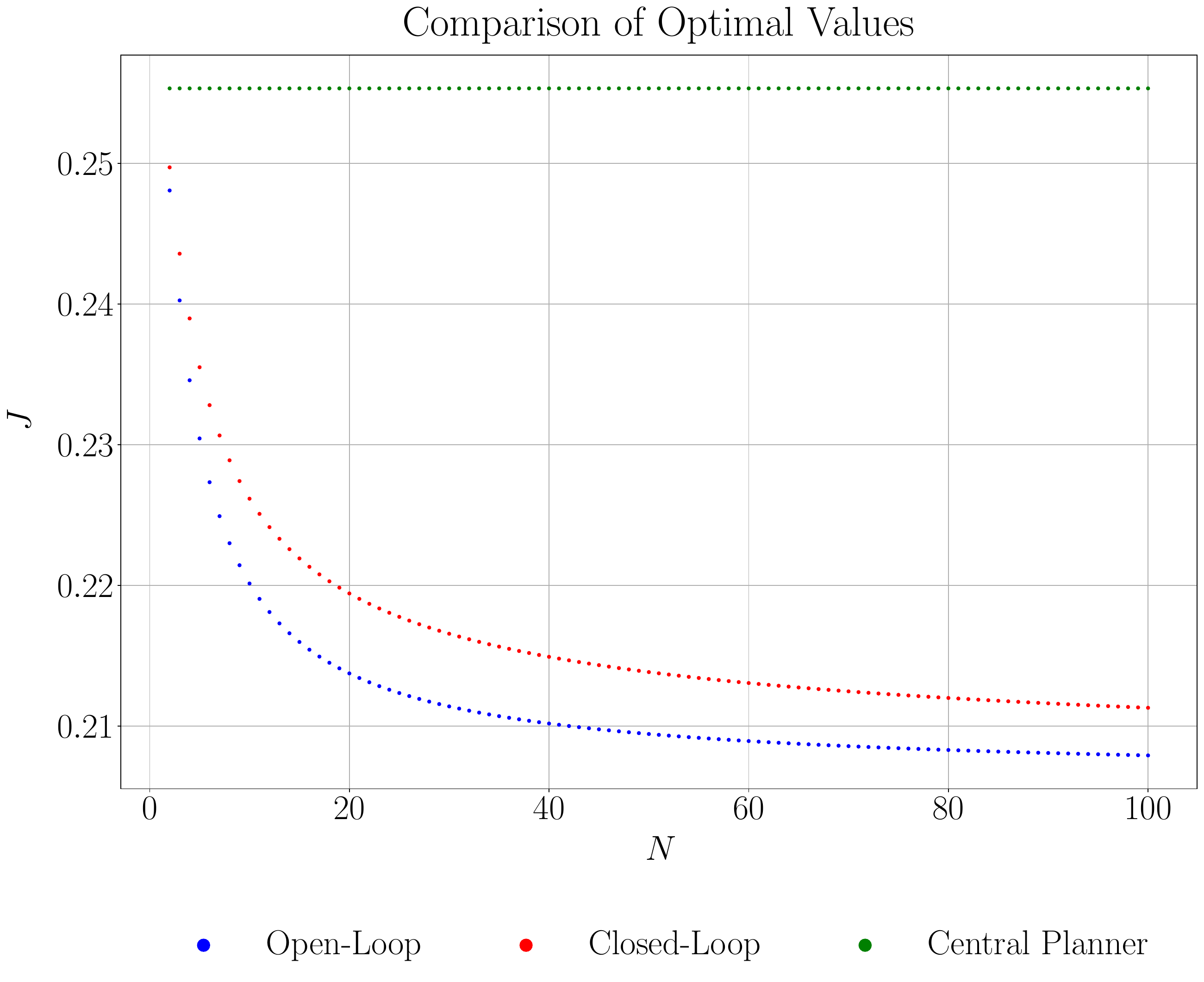}
\end{center}\caption{Optimal values in the central-planner model (green, top curve of dots), in the closed-loop equilibrium (red, middle curve), and in the open-loop equilibrium (blue, bottom curve) as fractions of the frictionless optimal value, plotted against the number of agents $N$ for the parameters from Table~\ref{table:1}.}
\label{fig:4}
\end{figure}

\paragraph{Performance}

Since the aim portfolio changes considerably relative to its frictionless counterpart, the optimal values in the different models also need to be compared using the exact formulas in the respective models rather than relying on asymptotics alone. This is illustrated in Figure~\ref{fig:4}, where we plot the optimal performances with trading costs against the number $N$ of agents in each model. To make these numbers easier to interpret, we report them as fractions of the optimal frictionless value. Recall that the closed-loop trading rates and aim portfolios move partially from the open-loop solution towards the central planner solution. Accordingly, the optimal values of the closed-loop equilibrium also fall between the open-loop model and the central-planner solution. The shortfall compared to the frictionless version of the model increases as the lack of cooperation becomes more and more severe for large $N$. This is called the  ``price of anarchy'' in the game-theory literature.

\paragraph{Mean-Field Limit?}

The numerical results reported above suggest that the open-loop and closed-loop models lead to rather similar results for realistic parameter values. An intriguing theoretical question is whether the difference vanishes completely in the ``mean-field limit'' of many small agents. 

Rigorous convergence results of this kind have recently been obtained for models with interaction through the controlled state processes~\citep{lacker.20,lacker.leflem.21} and for ``extended'' mean-field games with interaction through the agents' controls \citep{Djete21}. However in  their setup, each agent is also affected by an idiosyncratic noise, which is crucial to the proof of convergence.
 It is an important question for future research to study similar limiting results to generic ``extended'' mean-field games, where the only source of randomness is a common noise, as in the present paper. To prove this convergence result in our setting, one would have to study the large-$N$ asymptotics of the system of equations \eqref{eq-a-system}--\eqref{eq-f-system}, \eqref{eq-a-bar-system}--\eqref{eq-c-bar-system} characterizing our closed-loop equilibrium for the rescaled trading cost $\lambda/N$. We do not pursue this direction in this paper, but we report some positive numerical evidence. Specifically, in Figure~\ref{fig:5} we plot the ratio of the value functions in the closed-loop and open-loop models for numbers of agents up to $N=1000$ which conjecture the existence of \textit{some} limiting model to which the closed-loop and open-loop equilibria converge. Identifying the correct form of a limiting model that can in turn be studied directly is an important direction for future research.

\begin{figure}[htbp]
\begin{center}
\includegraphics[width=0.7\textwidth]{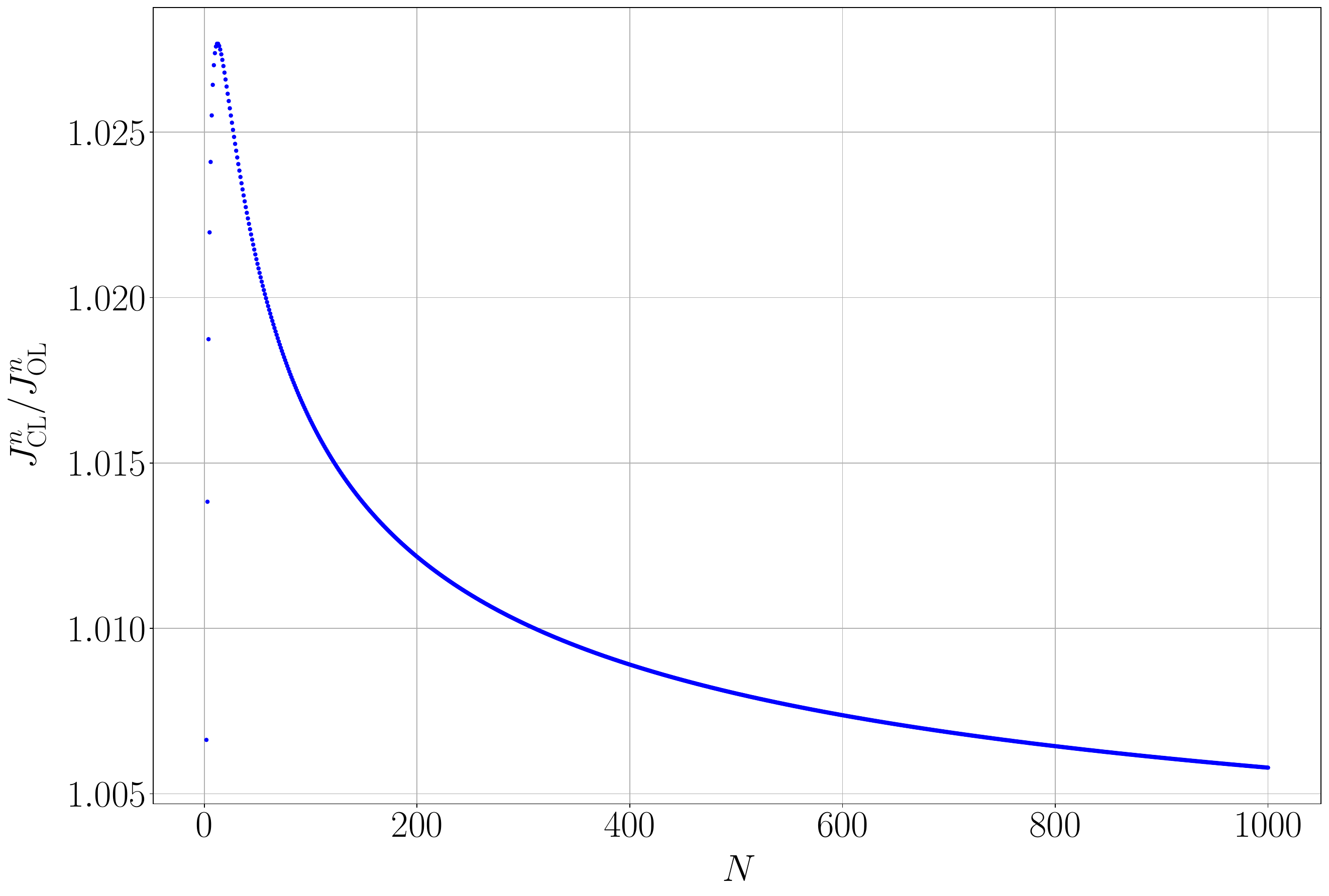}
\end{center}\caption{Ratio of the optimal values in the closed-loop and in the open-loop equilibriums, plotted against the number of agents $N$ for the parameters from Table~\ref{table:1}.}
\label{fig:5}
\end{figure}

\section{Heuristics for the Closed-Loop Equilibrium}\label{s:heuristics}

In single-agent models~\citep{GARLEANU16}, the optimal trading rate for Ornstein-Uhlenbeck returns~\eqref{eq-ornstein-uhlenbeck} is linear both in the current trading signal and in the agent's current position. In open-loop equilibria, the agents' optimal trading rates also depend on these two state variables as well as the other agents' positions in a linear fashion~\citep{voss.19,CasgrainJaimungal:20}. Accordingly, we search for closed-loop equilibria in the same linear class. 

To this end it suffices, by symmetry, to focus on the optimization problem of agent $n=1$, when the feedback trading rates of the other agents are fixed:
\begin{equation}\label{eq:control}
\dot\varphi^n_t= \bar{a} \mu_t+ \bar{b}\sum_{m \neq n} \varphi^m_t-\bar{c}\varphi^n_t, \quad n=2,\ldots,N.
\end{equation}
Here, $\bar{a},\bar{b},\bar{c}>0$, so that the above ansatz implies that agents try to reduce their own positions, but trade less to reduce price impact costs if others have the same objective. When the trading rates of agents $n=2,\ldots,N$ are fixed, agent $n=1$ faces a standard stochastic control problem of choosing their own trading rate $\dot{\varphi}^1$ to maximize 
\begin{small}
\begin{equation*}\label{eq:problem}
\mathbb{E}\left[\int_{0}^{\infty} e^{-\rho t}\left(\mu_t \varphi^1_t -\frac{\gamma}{2}(\varphi^1_t)^2- \lambda\dot\varphi^1_t\left(\dot\varphi^1_t +(N-1)(\bar{a}\mu_t +\bar{b}((N-2)\bar{\varphi}_t+\varphi^1_t)-\bar{c}\bar{\varphi}_t)\right)\right)dt\right].
\end{equation*}
\end{small}
Here, $\varphi^1_t$ and the (symmetric) positions $\bar{\varphi}_t=\varphi^2_t=\ldots=\varphi^N_t$ of the other agents have the coupled controlled dynamics
\begin{align*}
d\varphi^1_t &= \dot\varphi^1_tdt,\\
d\bar\varphi_t &= \left(\bar{a}\mu_t+\bar{b}((N-2)\bar\varphi_t+\varphi^1_t) -\bar{c}\bar{\varphi}_t\right)dt.
\end{align*}
By linearity of these dynamics and those of the exogenous state process $\mu_t$, we make the ansatz that the value function of agent $1$ is purely quadratic in the signal $m$, agent $1$'s own position $x$ and the other agents' positions $y$ (which are all the same given that they apply the same trading rates~\eqref{eq:control}):
\begin{equation}\label{eq:quadratic}
V(x,y,m)=-\frac{a}{2}x^2 + \frac{b}{2} y^2 +\frac{c}{2}m^2 -dxy +exm+fym+g,
\end{equation}
for constants $a,b,c,d,e,f,g$ to be determined.

\begin{remark}
If the Ornstein-Uhlenbeck state process~\eqref{eq-ornstein-uhlenbeck} has a nonzero mean-reversion level, then this purely quadratic ansatz needs to be extended to include three further linear terms.  The analysis below then generalizes, but involves three further equations. We therefore do not pursue this extension of the model in order not to complicate the already cumbersome calculations below.
\end{remark}

The corresponding standard infinite-horizon HJB equation is 
\be \label{eq-HJB-equation}
\begin{aligned}
\rho V =& mx -\frac{\gamma}{2}x^2 -\beta m \partial_m V +\frac12 \sigma^2 \partial^2_m V +\left[\bar{a}m+\bar{b}((N-2)y+x)-\bar{c}y \right] \partial_y V\\
& \quad +\sup_{\dot\varphi^1 \in \mathbb{R}}\big\{ -\lambda \dot\varphi^1[\dot\varphi^1+(N-1)\bar{a}m+(N-1)((N-2)\bar{b}-\bar{c})y \\
&\qquad \quad \qquad +(N-1)\bar{b}x]+\dot\varphi^1 \partial_x V\big\}.
\end{aligned}
\ee
After plugging in the quadratic ansatz~\eqref{eq:quadratic}, the pointwise maximizer can be computed as 
\begin{equation}\label{eq:FOC}
\dot\varphi^1= \frac{(e-\lambda(N-1)\bar{a})m - (a +\lambda(N-1)\bar{b})x -(d+\lambda(N-1)((N-2)\bar{b}-\bar{c}))y}{2\lambda}.
\end{equation}
After inserting this back into the HJB equation, comparison of coefficients for the terms proportional to $x^2$, $y^2$, $m^2$, $xy$, $xm$, $ym$ and constant terms in turn yields the following seven equations that pin down the coefficients $a$, $b$, $c$, $d$, $e$, $f$, $g$ of agent $1$'s value function for fixed trading rates~\eqref{eq:control} of the other agents:
\begin{align}
    0&= \frac{\rho a}{2}-\frac{\gamma}{2} -d\bar{b} + \frac{(a+ \lambda (N-1)\bar{b})^{2}}{4\lambda} \label{eq-a-system},\\
 0&=-\frac{\rho b}{2}+ b((N-2)\bar{b}-\bar{c}) + \frac{(d+\lambda (N-1)((N-2) \bar{b}-\bar{c}))^{2}}{4\lambda} \label{eq-b-system},\\
0&=\rho d + b\bar{b} - d((N-2)\bar{b}-\bar{c}) + \frac{(a+\lambda (N-1)\bar{b})(d+\lambda (N-1)((N-2) \bar{b}-\bar{c}))}{2\lambda} \label{eq-d-system},\\
0 &=-g + \frac{\sigma^{2}}{2\rho} c \label{eq-g-system},\\
0 &=-\left(\frac{\rho + 2\beta}{2}\right) c+ f\bar{a} + \frac{(e-\lambda(N-1)\bar{a})^{2}}{4\lambda}\label{eq-c-system}, \\
0 &= -(\rho + \beta) e+1 - d\bar{a} + \bar{b} f - \frac{(e-\lambda(N-1)\bar{a}) (a+\lambda(N-1)\bar{b})}{2\lambda}\label{eq-e-system}, \\
b&\bar{a} + f((N-2)\bar{b}-\bar{c}) =(\rho + \beta) f+ \frac{(e-\lambda(N-1)\bar{a}) (d + \lambda(N-1)((N-2)\bar{b}-\bar{c}))}{2\lambda}\label{eq-f-system}.
    \end{align}

A symmetric Nash equilibrium is then identified by requiring that agent $1$ has no incentive to deviate from the other agents' common controls, in that the same weights are placed on trading signals, own and others' inventories in each case. Comparison between~\eqref{eq:control} and~\eqref{eq:FOC} in turn leads to the following three additional equations:
\begin{align*}
\bar{a}&= \frac{(e-\lambda(N-1)\bar{a})}{2\lambda},\\
(N-1)\bar{b} &= -\frac{d+\lambda(N-1)((N-2)\bar{b}-\bar{c})}{2\lambda},\\
\bar{c} &= \frac{ (a +\lambda(N-1)\bar{b})}{2\lambda}.
\end{align*}
By algebraic manipulations, this system of ten nonlinear equations for ten unknowns can be reduced to a single scalar equation. To wit, first observe that
\be  \label{eq-a-bar-system}
\bar{a}=\frac{e}{(N+1)\lambda}.
\ee
Then, notice that the linear equations for $\bar{b}$ and $\bar{c}$ can be solved in terms of $a$ and $d$: 
 \begin{align}
  \bar{b}&=\frac{-aN+a+2 d}{\lambda -\lambda  N^2}, \label{eq-b-bar-system}\\  
\bar{c}&=-\frac{d-a N}{\lambda +\lambda  N} \label{eq-c-bar-system}.
\end{align}
After plugging in the expression for $\bar{b}$, \eqref{eq-a-system} can in turn be solved for $a$ in terms of $d$\footnote{Note that $a(d)$ chosen in~\eqref{eq-a-coeff} is the only root of \eqref{eq-a-system} that gives positive $\bar{a},\bar{b}$ and $\bar{c}$.}
 \begin{equation}\label{eq-a-coeff}
a(d)= \frac{d (6 N+2)-\lambda  (N+1)^2 \left(-\sqrt{-\frac{4 d^2 (3 N+1)}{\lambda ^2 (N-1) (N+1)^2}+\rho ^2+\frac{8 \gamma  N^2-4 d \rho  (3 N+1)}{\lambda 
   (N+1)^2}}+\rho \right)}{4 N^2}. 
\end{equation}
Note that $\bar{b}$ and $\bar{c}$ in~\eqref{eq-b-bar-system} and \eqref{eq-c-bar-system} depend only on $a(d)$ and $d$. Moreover, \eqref{eq-b-system} depends only on $\bar{b}$, $\bar{c}$, $d$ and $b$. Therefore, we plug $\bar{b}$ and $\bar{c}$ from \eqref{eq-b-bar-system} and \eqref{eq-c-bar-system} into  \eqref{eq-b-system} to obtain an expression for $b$ that only depends on $a(d)$ and $d$ as follows:
\begin{equation}\label{eq-b-coeff}
b(d) =\frac{2 (N-1) (-a(d) N+a(d)+2 d)^2}{(N+1) \left(4 a(d) (N-1)+2 d (N-3)+\rho  \lambda  \left(N^2-1\right)\right)}.
\end{equation}
The remaining parameters $c,e,f,g $ can in turn be sequentially expressed in terms of $d$ only as well by proceeding as follows. First, plug in the expressions for $a(d)$, $b(d)$, and the formulas for $\bar{a}$ (a linear function of $e$) and for $\bar{b}$ and $\bar{c}$ (in terms of $d$ only) into \eqref{eq-e-system} and \eqref{eq-f-system}. This leads to two linear equations for $e(d)$ and $f(d)$, that can be solved explicitly in terms of  $d$, see \eqref{eq-solution-e-f-c} for the lengthy explicit expressions. Then, after inserting these representations and the formula for $\bar{a}$ in terms of $d$,\footnote{Note that $\bar a$ is determined in terms of $d$ by \eqref{eq-a-bar-system} and our solution for $e$ as $\bar{a}(d)= \frac{e(d)}{(1+N)\lambda}.$} \eqref{eq-c-system} leads to an explicit formula for $c(d)$ in terms of $d$ (see \eqref{eq-solution-e-f-c}). Finally, the coefficient $g(d)$  can be determined in terms of $d$ by using the formula for $c(d)$ and \eqref{eq-g-system}, $g(d) = \frac{\sigma^{2}}{2\rho} c(d)$.

Finally, we use \eqref{eq-d-system} to determine an algebraic equation for the parameter $d$. Note that since $a(d)$ in \eqref{eq-a-coeff} depends only on $d$, then, $b(d)$ in \eqref{eq-b-coeff} depends only on $d$. Moreover, notice that  $\bar{b}(d)$ and $\bar{c}(d)$ as stated in \eqref{eq-b-bar-system} and \eqref{eq-c-bar-system} only depend on $d$ since they only depend on $d$ and $a(d)$. Hence, we can plug the expressions for $a(d)$, $b(d)$, $\bar{b}(d)$ and $\bar{c}(d)$ from  \eqref{eq-a-coeff}, \eqref{eq-b-coeff}, \eqref{eq-b-bar-system} and \eqref{eq-c-bar-system} into  \eqref{eq-d-system} to obtain an involved but explicit scalar equation for $d$:
 \begin{equation}
 \begin{aligned}
 \label{eq-expanded-d}
0&=\frac{-2 a(d)^2 N (N-1)^2}{\lambda  (N-1) (N+1)^2}
+\frac{a(d) (N-1) (b(d) N+b(d)+8 d N)}{\lambda  (N-1) (N+1)^2}\\& \qquad+\frac{d \left(-2 b(d) (N+1)+d (N-6) N+d+\lambda  (N-1) (N+1)^2 \rho \right)}{\lambda  (N-1) (N+1)^2}.
 \end{aligned} 
 \end{equation}
We stress that although $a(d)$ and $b(d)$ appear in \eqref{eq-expanded-d}, via the identities in \eqref{eq-a-coeff} and \eqref{eq-b-coeff},  \eqref{eq-expanded-d} is a scalar equation depending \emph{only} on the parameter $d$. In order to identify a solution $d=d(\lambda)$ of \eqref{eq-expanded-d} for fixed $N\geq 2$ and sufficiently small $\lambda$, we postulate the following factorization,  
\begin{equation}
\label{cov-d-h}
d(\lambda) = \sqrt{\gamma\lambda}\delta_N(\lambda),
\end{equation} 
for some function $\delta_N(\cdot)$ to be determined. The rescaling with $\sqrt{\lambda}$ is essential in order to obtain a nontrivial limit as $\lambda \dr 0$ in \eqref{eq-expanded-d}. The rescaling with $\sqrt{\gamma}$ leads to a limiting equation for $\lambda =0$ which does not depend on $\gamma$ but only on the number $N$ of agents. This change of variables therefore asymptotically decouples the effects of trading costs ($\lambda$), inventory costs ($\gamma$) and competition ($N$), see Lemma~\ref{prop-asymptotics-ift}(i).

With the change of variable in \eqref{cov-d-h}, the function $\delta_N(\lambda)$ is characterized as a root of a scalar equation obtained from \eqref{eq-expanded-d}:
\begin{equation}
\label{phi-N-lambda-h}
\Phi_{N}(\lambda,\delta_N(\lambda))=0.
\end{equation}
(See \eqref{eq-ift-phi}--\eqref{eq-theta} for the explicit form of $\Phi_{N}$.) In the limit $\lambda \downarrow 0$,  we show in Lemma~\ref{prop-asymptotics-ift} that $\delta_N(0)=:\delta_N^*>0$ is the unique root of this equation:
\begin{equation}\label{phi-N-lambda-limit}
\Phi_{N}(0,\delta^{*}_N)=0.
\end{equation}
In fact, any root of this equation is the root of a cubic polynomial. Cardano's method therefore leads to three explicit candidates for the roots of $\Phi_N(0,\cdot)$. Using symbolic calculations detailed in the Mathematica companion to this paper, we then verify that only one of these polynomial roots also is a root of $\Phi_N(0,\cdot)$.

Next, we also show that $\partial_{y}\Phi_{N}(\lambda,y)|_{(\lambda,y)=(0,\delta^*_N)}>0$. The implicit function theorem in turn allows us to pin down $\delta_{N}(\lambda)$ as the unique continuous function defined in a neighbourhood of $0$ such that \eqref{phi-N-lambda-h} holds. For sufficiently small $\lambda$, this yields a solution $d(\lambda)$ of \eqref{eq-expanded-d} and in turn our system of ten equations derived from the agents' optimality and consistency conditions.

The implicit function theorem also allows us to obtain the leading-order asymptotics of the agents' optimal policies and their performance. To wit, 
\begin{equation*} 
 \delta_N(\lambda) =\delta^{*}_N  + \mathcal{O}(\lambda).
 \end{equation*}
Together with \eqref{cov-d-h} , it follows that
\be \label{d-dl} 
d=\sqrt{\lambda\gamma}\delta^{*}_N+ \mathcal{O}(\lambda^{3/2}).
\ee
This allows us to obtain asymptotic expansions of all the coefficients in \eqref{eq-a-system}--\eqref{eq-a-bar-system} since these can all be expressed as functions of $d$. For example, using the expansion~\eqref{d-dl} for $d$, \eqref{eq:control}, the assumption of symmetric equilibrium and Taylor expansion lead to the expansion of the relative trading speed $M_{rate} $ from \eqref{eq-r-alpha-CL}, 
\begin{equation*} \label{constrnt-eq} 
M_{rate}   = \bar{c} - (N-1)\bar{b}  = \sqrt{\frac{\gamma}{\lambda}}\Delta(N) + \mathcal{O}(1),
\end{equation*}
where
\be \label{delta-n} 
\Delta(N)= \frac{ 1}{2 N^2} \sqrt{\frac{2 N^3-2 N^2-3 N (\delta_N^*)^2-(\delta_N^*)^2}{(N-1)  }}+\frac{(2 N+1) \delta_N^*}{2 N^2} . 
\ee
More details on the asymptotic expansions are provided in~Section \ref{sec-proof-sys-lem}.

\begin{remark} \label{algebra-comment} 
The asymptotic analysis is complicated by the fact that $\delta_N^*$ is a root of a third order polynomial (see \eqref{eq-polynomial}), which does not admit a simple expression unlike in single-agent models~\citep{GARLEANU16,moreau.al.17}. Explicit expressions for $\delta^*_N$ in terms of $N$ can be derived using Cardano's method, and these allow to verify that only one of the polynomial roots also is a root of the equation for $\delta^*_N$. However, the explicit expression (involving imaginary numbers) is too complex for deriving the analytical properties necessary to (i)  show that our candidate value function is well defined and (ii) establish a verification theorem. 

As a way out, we therefore instead use implicit properties of $\delta_N^*$ to show that the other coefficients $a$, $b$, $c$, $e$, $f$, $g$ are well defined (in terms of $d$, and therefore by \eqref{d-dl} in terms of $\delta_N^*$) and have the right signs to carry out the verification argument.

For example, in Lemma~\ref{lemma-h4-positive} we prove that for sufficiently small $\lambda$, $e(\lambda)$ is well defined and strictly positive. As follows from \eqref{eq-a-bar-system}, this is needed in order to verify that $\bar a>0$ in \eqref{constr}.  Another example appears in Lemma~\ref{lemma-alpha} where we prove that $M_{aim}  = 1+ \mathcal{O}(\sqrt{\lambda})$ for sufficiently small $\lambda $. This is essential to derive the asymptotics of the goal functionals $J^{n}(\dot\varphi^n; \dot\varphi^{-n})$ in Proposition~\ref{prop-asymptotic-cl}. As pointed out in \eqref{m-aim-exp}, the expansion of $M_{aim}$ follows from a highly nontrivial connection between $N$, $\delta^*_N$, and $\Delta(N)$ which is proved in Lemma~\ref{lemma-identities-h04-h14}, and arrises directly from properties of the polynomial root $\delta^*_N$. 
\end{remark} 

\section{Proofs} \label{sec-pf-cl} 

\subsection{Proof of Theorem~\ref{thm-nash-closed-loop}}

This section is dedicated to the proof of Theorem~\ref{thm-nash-closed-loop}. The most onerous part of the proof is to show that the optimality and consistency conditions derived in the previous section indeed have a solution with the right signs for sufficiently small $\lambda$. The lengthy proof of this result is postponed to Section~\ref{sec-proof-sys-lem} for better readability.

\begin{lemma}\label{lemma-system-value}
\begin{itemize} 
\item[\textbf{(i)}] 
For sufficiently small $\lambda>0$, there exists a solution to the system of nonlinear equation~\eqref{eq-a-system}--\eqref{eq-c-bar-system}.\footnote{This solution is in fact unique in the sense of the implicit function theorem, i.e., there is a unique limiting solution with a unique continuous extension, see Lemma~\ref{prop-asymptotics-ift} for a precise statement. }
\item[\textbf{(ii)}]  This solution $(a,b,c,d,e,f,g,\bar{a},\bar{b},\bar{c})$ to \eqref{eq-a-system}--\eqref{eq-c-bar-system} is well defined and satisfies
\be \label{constr} 
\bar{a} >0 \quad \textrm{and }  \bar{c} -(N-1)\bar b>0. 
\ee
\end{itemize} 
\end{lemma}

To prove our main result, Theorem~\ref{thm-nash-closed-loop}, we focus without loss of generality on the optimization problem of agent $1$, where the feedback controls of agents $2,\ldots,N$ are fixed according to~\eqref{eq:control}. Then, regardless of the policy chosen by agent 1, the trading rates and holdings of the other agents $n=2,\dots,N$ will be the same, so it is convenient to simplify the notation by using that
\begin{equation*}
\dot{\varphi}^2_{t} = \frac{1}{N-1} \sum_{k=2}^{N}  \dot\varphi^{k}_{t}, \quad \mbox{and}\quad  \varphi^2_{t} = \frac{1}{N-1} \sum_{k= 2}^{N}\varphi^{k}_{t}, \quad t\geq 0.
\end{equation*}
Note that $d\varphi^2_t = \dot\varphi^2_tdt$ and, by \eqref{eq:control},
\begin{equation}
\begin{aligned}
\dot\varphi^2_t &=\frac{1}{N-1}\sum_{k=2}^N(\bar{a} \mu_t - \bar{c}  \varphi^k_{t} + \bar{b}\sum_{m\neq k} \varphi^m_{t} )= \bar{a} \mu_t  +\bar{b} \varphi^1_{t}  -(\bar{c} - (N-2)\bar{b})\varphi^{2}_{t}.
\end{aligned}
\end{equation}
Using these observations, we can rewrite agent $1$'s goal functional~\eqref{def:FPGobjective} as follows:
\begin{equation}\label{eq-symmetric-cl-objective}
 \begin{aligned} 
&J^{1}(\dot \varphi ^{1}; \dot \varphi^{2})(x,y,m)\\
&= \mathbb{E}_{x,y,m}\bigg[\int_0^\infty e^{-\rho t}\bigg(\mu_t \varphi_t^{1} -\frac{\gamma}{2}\left(\varphi_t^{1}\right)^2 \\
&\qquad \qquad\quad -\lambda \dot \varphi_t^{1}\left(\dot \varphi^{1}_{t} + (N-1) (\bar{a} \mu_t +\bar{b} \varphi^{1}_{t} -(\bar{c}-(N-2)\bar{b})\varphi^{2}_{t})\right)\bigg)dt\bigg].
 \end{aligned} 
\end{equation}
Here the expectation $\mathbb{E}_{x,y,m}[\cdot]$ is taken conditional on the initial values $\varphi^{1}_{0}=x$, $\varphi^{2}_{0}=y$, and $\mu_0=m$; to ease notation, we often suppress this dependence. The corresponding value function is denoted by (here, the supremum is taken over admissible feedback controls in the sense of Definition~\ref{ad-control}):
 \be \label{val-cl} 
V^{1}(x,y,m) = \sup_{\dot \varphi^{1}} J^{1}(\dot \varphi ^{1}; \dot \varphi^{2})(x,y,m). 
 \ee

\begin{remark}
In fact, the subsequent analysis shows that deviations from the Nash equilibrium are also suboptimal among non-Markovian controls, as long as the resulting system of state equations has a sufficiently integrable solution. 
\end{remark}

We will prove Theorem \ref{thm-nash-closed-loop} using a verification argument that identifies agent 1's value function and optimal trading rate, and thereby shows that this (representative) agent has no incentive to deviate from the common feedback trading rate adopted by the other agents $n=2,\ldots,N$. As a preparation for this result, we first establish that our candidates for the equilibrium trading rates are indeed admissible. 

\begin{lemma}  \label{lem-admis} 
For the coefficients $ \bar{a},\bar{b},\bar{c}$ from Lemma~\ref{lemma-system-value}, consider the feedback trading rates
 \begin{equation} \label{opt-u-strat}  
\dot \varphi^n_t = \bar{a} \mu_t - \bar{c} \varphi^n_t +\bar{b}\sum_{m \neq n} \varphi^{m}_{t}, \quad n=1,\ldots,N, \quad t \geq 0.
  \end{equation}
 Then, the following system of linear forward SDEs
 \begin{equation}
d\varphi^{n}_{t} = \dot{\varphi}^n_t dt, \quad \varphi^{n}_{0}=x, \quad n=1,\ldots,N,
 \end{equation}
for $(\varphi^1,\ldots,\varphi^N)$ has a unique solution with $\varphi^n, \dot \varphi^n \in \mathcal A_{\rho}$ for all $n=1,\ldots,N$.
\end{lemma}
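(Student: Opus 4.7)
The plan is to observe that, once the feedback form \eqref{opt-u-strat} is substituted, the ``FBSDE system'' for $(\varphi^1,\ldots,\varphi^N,\dot\varphi^1,\ldots,\dot\varphi^N)$ collapses to a purely forward linear ODE (with random forcing by the Ornstein--Uhlenbeck signal), so that the real content of the lemma is the integrability claim. The first step is a symmetry reduction: since all agents have the same zero initial holding and obey the same feedback law, uniqueness for the resulting linear system forces $\varphi^1_t = \cdots = \varphi^N_t =: \varphi_t$, and the $N$-dimensional equation reduces to the scalar equation
\be
d\varphi_t = \left(\bar{a}\,\mu_t - M_{rate}\,\varphi_t\right)dt, \qquad \varphi_0 = 0,
\ee
where $M_{rate} := \bar{c} - (N-1)\bar{b}$, which is strictly positive by Lemma~\ref{lemma-system-value}.

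The second step is a pathwise solution by variation of constants,
\be
\varphi_t = \bar{a}\int_0^t e^{-M_{rate}(t-s)}\mu_s\,ds,
\ee
which is well defined because $\mu$ has continuous trajectories. Existence and uniqueness in the original $N$-dimensional system follow immediately (either from this explicit formula together with the symmetry reduction, or by applying classical linear ODE theory $\omega$-by-$\omega$ to the drift matrix $-\bar{c}I + \bar{b}(\mathbf{1}\mathbf{1}^\top - I)$).

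The third step verifies the integrability required by Definition~\ref{ad-control}. Since the Ornstein--Uhlenbeck process satisfies $\sup_{t \geq 0}\E[\mu_t^2] \leq K$ for a constant $K$ depending only on $\mu_0,\sigma,\beta$ (the stationary variance is $\sigma^2/(2\beta)$ and the transient variance stays uniformly bounded), Cauchy--Schwarz applied to the variation-of-constants formula together with Fubini gives
\be
\E[\varphi_t^2] \;\leq\; \frac{\bar{a}^2}{M_{rate}}\int_0^t e^{-M_{rate}(t-s)}\E[\mu_s^2]\,ds \;\leq\; \frac{\bar{a}^2 K}{M_{rate}^2}.
\ee
Integrating against $e^{-\rho t}$ then yields $\E\!\left[\int_0^\infty e^{-\rho t}\varphi_t^2\,dt\right] < \infty$, so $\varphi^n \in \mathcal{A}_\rho$; applying $(x+y)^2 \leq 2x^2 + 2y^2$ to the representation $\dot\varphi^n_t = \bar{a}\mu_t - M_{rate}\varphi^n_t$ then transfers the same bound to $\dot\varphi^n$.

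The argument is essentially mechanical once Lemma~\ref{lemma-system-value} is in hand; the one place where care is needed is to recognize that both the well-posedness of the ODE and the $e^{-\rho t}$-integrability require the \emph{strict} positivity $M_{rate} > 0$, which is precisely why the sign condition \eqref{constr} is imposed in Lemma~\ref{lemma-system-value}. Thus the main obstacle is not in the present lemma but in the earlier existence result that supplies this sign; everything here reduces to elementary linear ODE estimates against the second moment of an Ornstein--Uhlenbeck process.
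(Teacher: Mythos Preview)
Your proof is correct and follows essentially the same route as the paper: symmetry reduction to a scalar linear ODE, variation of constants, and integrability via Cauchy--Schwarz together with the uniformly bounded second moment of the Ornstein--Uhlenbeck process. The only cosmetic difference is that you obtain a uniform-in-$t$ bound on $\E[\varphi_t^2]$ before integrating against $e^{-\rho t}$, whereas the paper applies H\"older directly inside the time integral; both arguments are equivalent.
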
 

\begin{proof} 
As all agents apply the same trading rates in the context of this lemma, the corresponding trading rates~\eqref{opt-u-strat} simplify to 
\begin{equation}\label{eq:ratessym}
\dot \varphi^n_t = \bar{a} \mu_t+ ((N-1)\bar{b}-\bar{c})\varphi^n_t, \quad n=1,\ldots,N.
\end{equation}
Put differently, the corresponding positions $\varphi^n_t$ are the unique solution of a (random) linear ODE. By the variation of constants formula, it is given by
\begin{equation}\label{eq:varconst}
\varphi^n_t =e^{-M_{rate}t}x + \int_0^t e^{-M_{rate}(t-s)}\bar{a}\mu_s ds, \quad \mbox{where } M_{rate}  = \bar{c} -(N-1)\bar b.
\end{equation}
Together with~\eqref{eq:ratessym}, it follows that
\be \label{explicit-d-phi} 
\dot\varphi^n_t= \bar{a}\mu_t -M_{rate} \int_0^t e^{-M_{rate}(t-s)}\bar{a}\mu_s ds- M_{rate}  e^{-M_{rate}t}x.
\ee
As $M_{rate}>0$ by~ \eqref{constr}, \eqref{eq:varconst}, H\"older inequality, Fubini's theorem and the second moment of the Ornstein-Uhlenbeck process~\eqref{eq-ornstein-uhlenbeck} in turn yield the required integrability of the trading rate: 
\bd\label{n4} 
\begin{aligned}   
\mathbb{E} \Big[  \int_0^{\infty} e^{-\rho t} ( \varphi^n_t)^2dt  \Big] &\leq \frac{x^{2}}{\rho + 2 M_{rate}}+ \bar{a}^2 \mathbb{E} \Big[  \int_0^{\infty} e^{-\rho t}  \left( \int_0^te^{-2M_{rate} (t-s)}ds \right) \left(  \int_0^t \mu_s^2ds \right)dt\Big] \\
&\leq\frac{x^{2}}{\rho + 2 M_{rate}} +  \frac{\bar{a}^2}{2M_{rate}} \int_0^{\infty} e^{-\rho t} \left(\int_0^t \mathbb{E}[ \mu^2_s]ds\right)dt  \\ 
&\leq \frac{x^{2}}{\rho + 2 M_{rate}}+ \frac{\bar{a}^2}{2M_{rate}} \left(\frac{\sigma^{2}}{2\beta} \right) \frac{1}{\rho},
  \end{aligned} 
\ed
and therefore $\varphi^n \in \mathcal A_{\rho}$. The corresponding integrability of the associated trading rate $\dot \varphi^n$ follows from the representation~\eqref{explicit-d-phi} along the same lines. 
  \end{proof} 
  
The exogenous signal process $\mu$ evidently has the same integrability (this can be checked, e.g., using the formula for its second moment):

\begin{lemma}\label{lemma-ornstein-uhlenbeck}
For $\rho>0$ and $\mu$ from~\eqref{eq-ornstein-uhlenbeck}, we have $\mu \in \mathcal A_{\rho}$.
\end{lemma}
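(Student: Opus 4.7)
The plan is a short Fubini-and-bound argument. By the variation-of-constants formula applied to the SDE~\eqref{eq-ornstein-uhlenbeck}, the Ornstein--Uhlenbeck process admits the explicit representation
\begin{equation*}
\mu_t = \mu_0 e^{-\beta t} + \sigma \int_0^t e^{-\beta(t-s)}\, dW_s, \qquad t\geq 0.
\end{equation*}
In particular $\mu$ is continuous and adapted, hence progressively measurable. Applying the It\^o isometry to the stochastic integral, I obtain the uniform second-moment bound
\begin{equation*}
E[\mu_t^2] = \mu_0^2 e^{-2\beta t} + \frac{\sigma^2}{2\beta}\bigl(1 - e^{-2\beta t}\bigr) \leq \mu_0^2 + \frac{\sigma^2}{2\beta}, \qquad t \geq 0,
\end{equation*}
where the uniform bound crucially uses $\beta>0$.

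With this bound in hand, I would simply apply Tonelli's theorem to exchange the expectation and the time integral:
\begin{equation*}
E\!\left[\int_0^\infty e^{-\rho t} \mu_t^2\, dt\right] = \int_0^\infty e^{-\rho t}\, E[\mu_t^2]\, dt \leq \left(\mu_0^2 + \frac{\sigma^2}{2\beta}\right)\int_0^\infty e^{-\rho t}\, dt = \frac{1}{\rho}\left(\mu_0^2 + \frac{\sigma^2}{2\beta}\right) < \infty,
\end{equation*}
which is finite because $\rho > 0$. This delivers $\mu \in \mathcal{A}_\rho$. There is no real obstacle here; both positivity assumptions $\rho,\beta>0$ are used in an essential way (the first for integrability of $e^{-\rho t}$ on $[0,\infty)$, the second for stationarity of the variance of $\mu$), and no further structural property is required.
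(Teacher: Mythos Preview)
Your proof is correct and follows exactly the approach the paper indicates: the paper does not spell out a proof but simply remarks that the claim ``can be checked, e.g., using the formula for its second moment,'' which is precisely the Tonelli-plus-second-moment computation you carry out.
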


The next ingredient for our verification theorem is to show that the local martingale that appears when It\^o's formula is applied to the candidate value function is in fact a true martingale.

\begin{lemma}
\label{lemma-square-integrable-martingale}
Suppose agents $n=2,\ldots,N$ use the same admissible trading rates $\dot\varphi^2=\ldots=\dot\varphi^N$ from~\eqref{opt-u-strat} and define the function $V$ as in \eqref{eq:quadratic}. Then, for any admissible trading rate $\dot\varphi_{1} \in \mathcal{A} _{\rho}$ of agent $1$, the process 
\begin{equation*}
M^{1}_{t} = \int^{t}_{0}e^{-\rho t}\partial_{m}V(\varphi_{t}^{1}, \varphi^2_t, \mu_t)dW_{t}, \quad t\geq 0, 
\end{equation*}
is a square-integrable true martingale.
\end{lemma}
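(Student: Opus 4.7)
The plan is to verify that $M^{1}$ is a continuous local martingale whose quadratic variation has integrable expectation, which is a standard sufficient condition for it to be a square-integrable true martingale. From the quadratic ansatz \eqref{eq:quadratic}, $\partial_{m}V(x,y,m)=cm+ex+fy$, so
\bd
\langle M^{1}\rangle_{\infty}=\int_{0}^{\infty}e^{-2\rho s}\bigl(c\mu_{s}+e\varphi^{1}_{s}+f\varphi^{2}_{s}\bigr)^{2}\,ds.
\ed
Applying the elementary inequality $(u+v+w)^{2}\leq 3(u^{2}+v^{2}+w^{2})$ together with the bound $e^{-2\rho s}\leq e^{-\rho s}$ (valid since $\rho>0$ and $s\geq 0$) reduces the task to showing that $\mu$, $\varphi^{1}$, and $\varphi^{2}$ all belong to $\mathcal{A}_{\rho}$. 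The first is Lemma~\ref{lemma-ornstein-uhlenbeck}, and the second is built into the admissibility hypothesis $\dot\varphi^{1}\in\mathcal{A}_{\rho}$ imposed on agent~$1$.

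The only step that requires any real work is the integrability of $\varphi^{2}$, because Lemma~\ref{lem-admis} covers only the fully symmetric case in which agent~$1$ also applies the feedback rule \eqref{opt-u-strat}. Here, by contrast, agent~$1$ is allowed to deviate. Since agents $n=2,\ldots,N$ still use the common symmetric feedback \eqref{opt-u-strat} with positive coefficients $\bar a,\bar b,\bar c$, their identical positions $\varphi^{2}=\cdots=\varphi^{N}$ satisfy the random linear ODE
\bd
\dot\varphi^{2}_{t}=\bar a\,\mu_{t}+\bar b\,\varphi^{1}_{t}-\bigl(\bar c-(N-2)\bar b\bigr)\varphi^{2}_{t},
\ed
whose decay coefficient $\bar c-(N-2)\bar b$ is positive by \eqref{constr} together with $\bar b>0$ (indeed $\bar c-(N-2)\bar b=\bar c-(N-1)\bar b+\bar b>\bar b>0$). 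Variation of constants followed by a H\"older/Fubini argument strictly parallel to the one proving Lemma~\ref{lem-admis} --- now with driving term $\bar a\mu_{s}+\bar b\varphi^{1}_{s}$ in place of $\bar a\mu_{s}$ --- bounds $E\int_{0}^{\infty}e^{-\rho t}(\varphi^{2}_{t})^{2}dt$ in terms of the $\mathcal{A}_{\rho}$-norms of $\mu$ and $\varphi^{1}$, both of which are already known to be finite.

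Putting these three bounds together gives $E[\langle M^{1}\rangle_{\infty}]<\infty$, which is the promised integrability of the quadratic variation, and hence $M^{1}$ is a square-integrable true martingale with $E[(M^{1}_{t})^{2}]=E[\langle M^{1}\rangle_{t}]$. The main (and only mildly nontrivial) obstacle is thus the asymmetric variation-of-constants estimate for $\varphi^{2}$ in the case where agent~$1$ deviates from the putative equilibrium; everything else is a routine combination of admissibility with Lemma~\ref{lemma-ornstein-uhlenbeck}.
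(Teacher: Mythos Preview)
Your argument is correct and follows the paper's route: compute $\partial_m V=cm+ex+fy$, bound the quadratic variation by the $\mathcal{A}_\rho$-norms of $\mu$, $\varphi^1$, $\varphi^2$, and conclude. The paper, however, does not carry out your variation-of-constants estimate for $\varphi^2$: in Definition~\ref{ad-control} admissibility is a property of the \emph{full} feedback vector $(\dot\varphi^1,\ldots,\dot\varphi^N)$ and already requires the entire position process $\varphi$ to lie in $\mathcal{A}_\rho^N$, so $\varphi^2\in\mathcal{A}_\rho$ is part of the hypothesis and your ``mildly nontrivial'' step is unnecessary (though still valid).
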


\begin{proof}
From \eqref{eq:quadratic} we have $\partial_{m}V(x,y,m) = ex + fy + cm$. Accordingly, there exists a constant $C>0$ such that
\begin{equation*}
\begin{aligned}
\mathbb{E}\left[\langle M^{1}\rangle_{t}\right] &\leq C \mathbb{E}\left[\int^{t}_{0}\exp\{-2\rho t\}\Big( (\varphi_{s}^1)^2+ (\varphi^2_s)^2 + (\mu_{s})^2\Big) ds\right]<\infty,  \quad \textrm{for all } t \geq 0.
\end{aligned}
\end{equation*}
Here, we used Lemma~\ref{lemma-ornstein-uhlenbeck} and the admissibility of the trading rates in the last step. Hence, the local martingale $M^{1}$ is indeed a square-integrable martingale.
\end{proof}

Next, we show that the agents' \emph{infinite-horizon} goal functionals are indeed well defined for admissible trading rates. 

\begin{lemma}
\label{lemma-extension-dct}
For any admissible trading rates $\dot \varphi = (\dot \varphi_{1},...,\dot \varphi_{N})$, the following limit is finite: 
\be \label{dct-eq}
\begin{aligned}
&\lim_{T\to\infty} \mathbb{E}\left[\int^{T}_{0} e^{-\rho t} \left(\mu_t\varphi_t^n -\frac{\gamma}{2}(\varphi_t^n)^{2} -\lambda \dot\varphi^n_{t}\Big( \sum_{k=1}^{N} \dot \varphi_t^k \Big) \right)dt\right]  \\
& =  \mathbb{E}\left[\int^{\infty }_{0} e^{-\rho t} \left(\mu_t\varphi_t^n -\frac{\gamma}{2}(\varphi_t^n)^{2} -\lambda \dot\varphi^n_{t}\Big( \sum_{k=1}^{N} \dot \varphi_t^k \Big) \right)dt\right].
\end{aligned}
\ee
\end{lemma}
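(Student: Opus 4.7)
The plan is a direct application of the dominated convergence theorem, with a dominating function constructed from the squares of the processes entering the integrand. Admissibility of $\dot{\varphi}^n, \varphi^n \in \mathcal{A}_\rho$ together with Lemma~\ref{lemma-ornstein-uhlenbeck}, which gives $\mu \in \mathcal{A}_\rho$, make the integrand absolutely integrable on $[0,\infty)\times \Omega$ under the measure $e^{-\rho t}dt\otimes d\mathbb{P}$, which is exactly what is required to interchange the limit in $T$ with the expectation and to guarantee that the limit is finite.

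First I would apply the elementary inequality $|ab|\leq \tfrac12(a^2+b^2)$ to each mixed term to bound, for every $t\geq 0$,
\begin{equation*}
|\mu_t\varphi_t^n|\leq \tfrac12\bigl(\mu_t^2+(\varphi_t^n)^2\bigr),
\qquad
|\dot\varphi_t^n\,\dot\varphi_t^k|\leq \tfrac12\bigl((\dot\varphi_t^n)^2+(\dot\varphi_t^k)^2\bigr),
\end{equation*}
so that the absolute value of the integrand in \eqref{dct-eq} is dominated by
\begin{equation*}
e^{-\rho t}\left(\tfrac12\mu_t^2+\tfrac{1+\gamma}{2}(\varphi_t^n)^2+\tfrac{\lambda}{2}\sum_{k=1}^N\!\bigl((\dot\varphi_t^n)^2+(\dot\varphi_t^k)^2\bigr)\right).
\end{equation*}
This upper bound is independent of $T$. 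Moreover, $\mathbb{E}\bigl[\int_0^\infty e^{-\rho t}\mu_t^2\,dt\bigr]<\infty$ by Lemma~\ref{lemma-ornstein-uhlenbeck}, whereas $\mathbb{E}\bigl[\int_0^\infty e^{-\rho t}(\varphi_t^n)^2\,dt\bigr]<\infty$ and $\mathbb{E}\bigl[\int_0^\infty e^{-\rho t}(\dot\varphi_t^k)^2\,dt\bigr]<\infty$ for every $k=1,\ldots,N$ by admissibility and the definition of $\mathcal{A}_\rho$. Hence Fubini's theorem allows us to treat the integrand as a function on the product space $[0,\infty)\times\Omega$ that is absolutely integrable against the dominating measure.

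Applying the dominated convergence theorem to the sequence of truncated integrals indexed by $T\uparrow\infty$ then yields \eqref{dct-eq} and shows that both sides are finite. I do not anticipate any substantive obstacle here: the argument is mechanical and relies solely on Cauchy--Schwarz-type bounds together with the admissibility conditions and the integrability of the Ornstein--Uhlenbeck process already established earlier in the paper.
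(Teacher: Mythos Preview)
Your proposal is correct and follows essentially the same approach as the paper: both bound the integrand by a $T$-independent integrable majorant using the admissibility of $\varphi^n,\dot\varphi^k\in\mathcal{A}_\rho$ and Lemma~\ref{lemma-ornstein-uhlenbeck}, then invoke dominated convergence. The only cosmetic difference is that you use the pointwise inequality $|ab|\le\tfrac12(a^2+b^2)$ whereas the paper uses H\"older's inequality on the integrals.
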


\begin{proof}
The triangle inequality and H\"older's inequality yield
\begin{align*}
F_{T}=&\left|\int^{T}_{0} e^{-\rho t} \left(\mu_t\varphi_t^n -\frac{\gamma}{2}(\varphi_t^n)^{2} -\lambda \dot\varphi^n_{t}\Big( \sum_{k=1}^{N} \dot \varphi_t^k \Big) \right)dt\right|  \\
&\leq \Big(\int^{T}_{0} e^{-\rho t} |\mu_t|^{2}dt \Big)^{1/2}\Big(\int_{0}^{T}e^{-\rho t}(\varphi_t^n)^{2}dt\Big)^{1/2} +\frac{\gamma}{2} \int^{T}_{0}e^{-\rho t} (\varphi_t^n)^{2}dt  \\
&\quad +\lambda \Big(\int_{0}^{T} e^{-\rho t} |\dot\varphi^n_{t}|^{2} dt\Big)^{1/2} \Big(\int_{0}^{T} e^{-\rho t}\Big(\sum_{k=1}^{N} |\dot \varphi_t^k|\Big)^{2} dt\Big)^{2}. 
\end{align*}
As a result, \eqref{dct-eq} follows from Lemmas \ref{lem-admis} and \ref{lemma-ornstein-uhlenbeck} and dominated convergence.  
 \end{proof}

Finally, as the last ingredient for our verification argument, we show that the candidate value function satisfies a transversality condition.

\begin{lemma}
\label{lemma-value-subexponential}
Suppose agents $n=2,\ldots,N$ use the same admissible trading rates $\dot\varphi^2=\ldots=\dot\varphi^N$ from~\eqref{opt-u-strat} and define the function $V$ as in \eqref{eq:quadratic}. Then, for any admissible trading rate $\dot \varphi^1$ of agent $1$, we have
\begin{equation}\label{eq:trans}
\liminf_{T\to\infty}\mathbb{E}\left[e^{-\rho T}V(\varphi_T^1, \varphi_T^2, \mu_T)\right] =0.
\end{equation}
\end{lemma}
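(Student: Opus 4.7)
My plan is to reduce~\eqref{eq:trans} to an $L^{1}$-extraction argument that exploits the at-most-quadratic growth of $V$. Since $V$ in~\eqref{eq:quadratic} is a quadratic polynomial in $(x,y,m)$ plus the additive constant $g$, the elementary estimate $|xy|\le\tfrac{1}{2}(x^{2}+y^{2})$ applied to each of the cross terms $-dxy$, $exm$, $fym$ furnishes a constant $C>0$, depending only on $a,b,c,d,e,f$, with
\begin{equation*}
|V(x,y,m)-g|\;\le\;C\bigl(x^{2}+y^{2}+m^{2}\bigr), \qquad (x,y,m)\in\mathbb{R}^{3}.
\end{equation*}
Substituting the processes $(\varphi_{T}^{1},\varphi_{T}^{2},\mu_{T})$, multiplying by $e^{-\rho T}$ and taking expectations yields
\begin{equation*}
\Bigl|\mathbb{E}\bigl[e^{-\rho T}V(\varphi_{T}^{1},\varphi_{T}^{2},\mu_{T})\bigr]\Bigr|\;\le\;e^{-\rho T}|g|+C\,F(T), \qquad F(T):=\mathbb{E}\!\left[e^{-\rho T}\bigl((\varphi_{T}^{1})^{2}+(\varphi_{T}^{2})^{2}+\mu_{T}^{2}\bigr)\right].
\end{equation*}
Since $e^{-\rho T}|g|\to 0$, it suffices to exhibit a sequence $T_{k}\uparrow\infty$ along which $F(T_{k})\to 0$; any such sequence then forces $\mathbb{E}[e^{-\rho T_{k}}V(\varphi_{T_{k}}^{1},\varphi_{T_{k}}^{2},\mu_{T_{k}})]\to 0$, delivering~\eqref{eq:trans}.

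I would produce such a subsequence directly from integrability. By nonnegativity of the integrand, Tonelli's theorem gives
\begin{equation*}
\int_{0}^{\infty} F(T)\,dT\;=\;\mathbb{E}\!\left[\int_{0}^{\infty} e^{-\rho T}\bigl((\varphi_{T}^{1})^{2}+(\varphi_{T}^{2})^{2}+\mu_{T}^{2}\bigr)\,dT\right],
\end{equation*}
which is finite because $\mu\in\mathcal{A}_{\rho}$ by Lemma~\ref{lemma-ornstein-uhlenbeck} and $\varphi^{1},\varphi^{2}\in\mathcal{A}_{\rho}$ by the admissibility of agent~$1$'s deviation in the sense of Definitions~\ref{ad-control} and~\ref{cl-equi}, which require the combined position vector $(\varphi^{1},\ldots,\varphi^{N})$ to lie in $\mathcal{A}_{\rho}^{N}$. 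Hence $F\in L^{1}([0,\infty))$, and any nonnegative function in $L^{1}([0,\infty))$ must satisfy $\liminf_{T\to\infty}F(T)=0$, which produces the desired subsequence and closes the argument.

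The step I expect to take the most care is justifying $\varphi^{2}\in\mathcal{A}_{\rho}$ when agent~$1$ deviates from the fully symmetric profile covered by Lemma~\ref{lem-admis}. The cleanest route is simply to read this off Definitions~\ref{ad-control}--\ref{cl-equi}, which build it into the very notion of an admissible deviation. For completeness it can also be verified from first principles: the common position $\bar\varphi=\varphi^{2}$ obeys the linear ODE $d\bar\varphi_{t}=\bigl(\bar a\mu_{t}+\bar b\varphi^{1}_{t}-(\bar c-(N-2)\bar b)\bar\varphi_{t}\bigr)dt$, whose drift coefficient $\bar c-(N-2)\bar b=M_{rate}+\bar b$ is strictly positive (using $M_{rate}>0$ from Lemma~\ref{lemma-system-value} and the sign of $\bar b$ encoded in the ansatz~\eqref{eq:control}); a variation-of-constants estimate paralleling the proof of Lemma~\ref{lem-admis} then transfers $\mathcal{A}_{\rho}$-integrability from $(\mu,\varphi^{1})$ to $\bar\varphi$. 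Once this bookkeeping is in place, the remainder of the proof is the routine subsequence extraction described above, and no deeper obstacle arises.
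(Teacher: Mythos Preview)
Your proof is correct and follows essentially the same route as the paper: bound the quadratic $V$ by $C(x^{2}+y^{2}+m^{2})$ plus a constant, invoke $\mathcal{A}_{\rho}$-membership of $\mu,\varphi^{1},\varphi^{2}$ to make $F(T)$ integrable on $[0,\infty)$, and extract a subsequence along which it vanishes. Your version is slightly more explicit than the paper's---in particular you work directly with the sum $F(T)$ rather than three separate $\liminf$ statements, and you spell out why $\varphi^{2}\in\mathcal{A}_{\rho}$ under a deviation by agent~$1$---but the underlying argument is identical.
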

    
\begin{proof}
By Lemma~\ref{lemma-ornstein-uhlenbeck} and admissibility of the trading rates, $\mu, \varphi^1$ and $\varphi^2$ all belong to $\mathcal A_\rho$. In particular,
\begin{equation}\label{eq-subexponential-sequence}
\begin{aligned} 
\liminf_{t \rr \infty}  \mathbb{E}\left[ e^{-\rho t}(\varphi_t^1)^{2}\right] &= 0, \quad \liminf_{t \rr \infty}  \mathbb{E}\left[ e^{-\rho t}(\varphi^2_t)^{2}\right] &= 0, \quad  \liminf_{t \rr \infty}  \mathbb{E}\left[ e^{-\rho t}\left(\mu_t\right)^{2}\right] &= 0. 
\end{aligned} 
 \end{equation}
The transversality condition~\eqref{eq:trans} in turn follows from the definition of $V$ in \eqref{eq:quadratic}, \eqref{eq-subexponential-sequence} and H\"older's inequality.
 \end{proof}

Now, we are ready to prove a verification theorem which shows that the function~$V$ from \eqref{eq:quadratic} indeed is the value function $V^{1}$ of agent $1$'s optimization problem~\eqref{val-cl}  for fixed feedback controls of the other agents. As a byproduct, we obtain the optimal feedback trading rate of agent $1$, which indeed coincides with the fixed feedback trading rates of the other agents $n=2,\ldots,N$ as required for a Nash equilibrium.

\begin{proposition}\label{thm-verification}
Suppose the price impact parameter $\lambda$ is sufficiently small for the coefficients $(a,b,c,d,e,f,g,\bar{a},\bar{b},\bar{c})$ to satisfy the statement of Lemma \ref{lemma-system-value}.
Define the trading rates $\dot \varphi^{2},...,\dot \varphi^{N}$ of agents $n=2,\ldots,N$ as in \eqref{opt-u-strat} and suppose agents $n=2,\ldots,N$ use the same admissible trading rates $\dot\varphi^2=\ldots=\dot\varphi^N$. Moreover, define the function $V$ as in \eqref{eq:quadratic}. Then, we have  
\be \label{v-opt}
V  \geq J^{1}(\dot\varphi^1; \dot\varphi^2), 
\ee
for all admissible feedback trading rates $\dot\varphi^1$ of agent $1$, with equality if $\dot\varphi^1$ is also given by  \eqref{opt-u-strat}. 
\end{proposition}

\begin{proof}
Let $\dot\varphi^1(x,y,m)$ be any admissible feedback control for agent $1$ and define the operator
$$
 \mathcal{L} =\dot{\varphi}^1(x,y,m)\partial_{x}+( \bar{a}\mu+\bar{b} x- (\bar{c}-(N-2)\bar{b}) y)\partial_{y}-\beta m \partial_{m} + \frac{\sigma^{2}}{2}\partial_{m}^{2}. 
$$
With this notation, It\^o's formula yields
\begin{equation}\label{eq-verification-w-martingale}
\begin{aligned}
d\Big(e^{-\rho t}V(\varphi^1_t,\varphi^2_{t},\mu_t)\Big)
&=-\rho e^{-\rho t}V(\varphi^1_t,\varphi^2_{t},\mu_t)dt + e^{-\rho t}\mathcal{L}V(\varphi^1_t,\varphi^2_{t},\mu_t)dt \\ 
&\quad +\sigma e^{-\rho t} \partial_{m} V(\varphi^1_t,\varphi^2_{t},\mu_t) dW_{t}. 
\end{aligned}
\end{equation}
The stochastic integral in \eqref{eq-verification-w-martingale} is a true martingale by Lemma~\ref{lemma-square-integrable-martingale}. Integrating and
taking expectation on both sides of \eqref{eq-verification-w-martingale} in turn leads to
\begin{equation}
\begin{split}
\label{eq-equality-verification-generator}
&V(x,y,m) -\mathbb{E}\left[e^{-\rho T}V(\varphi^1_T,\varphi^2_{T},\mu_T)\right] \\ 
&\quad=  \mathbb{E}\left[ \int^{T}_{0 } e^{-\rho t}\big(\rho V(\varphi^1_t,\varphi^2_{t},\mu_t)-\mathcal{L}V(\varphi^1_t,\varphi^2_{t},\mu_t)\Big)dt  \right].
\end{split}
\end{equation}
As the coefficients $(a,b,c,d,e,f,g,\bar{a},\bar{b},\bar{c})$ satisfy \eqref{eq-a-system}--\eqref{eq-c-bar-system} by Lemma \ref{lemma-system-value} for $\lambda$ small enough, the function $V$ solves the HJB equation \eqref{eq-HJB-equation} (compare the derivation of \eqref{eq-a-system}--\eqref{eq-c-bar-system} in Section~\ref{s:heuristics}). As a consequence, we have 
 \be \label{ineq-hjb} 
\rho  V -  \mathcal{L}V
 \geq x\mu - \lambda \dot\varphi^1( \dot\varphi^1+ (N-1)(\bar{a}m +\bar{b} x -(\bar{c}-(N-2)\bar{b})y)) - \frac{\gamma}{2}x^{2}.
  \ee
Together, \eqref{eq-equality-verification-generator} and \eqref{ineq-hjb} show
\be \label{ineq-rr} 
\begin{aligned}
&V(x,y,m) -\mathbb{E}\left[e^{-\rho T}V(\varphi^1_T, \varphi^2_{T},\mu_T)\right]  \\
&\quad\geq \mathbb{E}\bigg[\int^{T}_{0}e^{-\rho t} \big(\mu_t \varphi^1_{t}  -\frac{\gamma}{2}\big(\varphi^1_t\big)^{2} \\
&\qquad  \quad - \lambda \dot\varphi^1_t ( \dot\varphi^1_{t} + (N-1)(\bar{a} \mu_t +\bar{b} \varphi^1_{t} -(\bar{c}-(N-2)\bar{b})\varphi^2_{t}))\big)dt\bigg].
\end{aligned}
\ee
In view of Lemma~\ref{lemma-extension-dct} and the transversality condition from Lemma~\ref{lemma-value-subexponential}, we can now take the limit $T\to\infty$ on both sides of \eqref{ineq-rr} and obtain the asserted upper bound~\eqref{v-opt} for any admissible feedback control.

The admissibility of our candidate $\dot \varphi^1$ has already been established in Lemma \ref{lem-admis}. Moreover, note that when $\dot \varphi^1$ is given by \eqref{opt-u-strat} like the other agents' controls, then it follows by construction that the inequality \eqref{ineq-hjb} holds with equality (because this choice corresponds to the pointwise maximizer in the HJB equation).  Repeating the arguments leading to \eqref{ineq-rr} then shows that~\eqref{v-opt} holds with equality. 

As a result, $V$ indeed is the value function of agent $1$ given the feedback trading rates \eqref{opt-u-strat} of agents $n=2,\ldots,N$. Moreover, the same feedback trading rate is indeed optimal for agent $1$ as required for a Nash equilibrium.
\end{proof}

We finally complete the proof of the last missing items from Theorem~\ref{thm-nash-closed-loop}.

 \begin{proof}[Proof of Theorem \ref{thm-nash-closed-loop}.]
To finish the proof of Theorem~\ref{thm-nash-closed-loop}, we first show that the feedback trading rate $\dot \varphi^1$ from \eqref{explicit-d-phi} can be rewritten in the form \eqref{eq-nash-control}. This follows by defining 
\be\label{n3} 
M_{aim}  =  \frac{\bar{a}\gamma}{M_{rate}},  
\ee
and observing that \eqref{eq:varconst} and \eqref{explicit-d-phi} satisfy \eqref{eq-nash-control}. In view of the results already established in Proposition \ref{thm-verification}, it now only remains to derive the representation~\eqref{val-cl-f} for the optimal equilibrium value. Henceforth, we assume a zero initial signal $\mu_{0}=0$ and zero initial positions $\varphi^{i}_{0}= 0$, $i=1,...,N$. For the symmetric trading rates $\dot\varphi^1_t=\ldots=\dot\varphi^N_t=\dot\varphi_t$ from~\eqref{opt-u-strat}, we can (with a slight abuse of notation) rewrite the goal functionals~\eqref{def:FPGobjective} as follows: 
 \begin{equation}
J(\dot \varphi)= \mathbb{E}\left[\int_0^\infty e^{-\rho t}\Big(\mu_t \varphi_t -\frac{\gamma}{2}(\varphi_t)^2 -\lambda N (\dot \varphi_t)^{2} \Big) dt\right].
\end{equation}
The representation~\eqref{eq-nash-control} gives 
\begin{equation} 
(\dot\varphi_t)^{2} =\left(\frac{M_{rate}  M_{aim}  }{\gamma}\right)^{2}\mu_t^{2} -2M_{rate}^{2} \frac{M_{aim}  }{\gamma} \mu_t\varphi_t+ M_{rate}^{2} \left(\varphi_t\right)^{2},
\end{equation}
from which we obtain
\be \label{ghh} 
\begin{aligned}
J(\dot \varphi)&=   \mathbb{E}\Bigg[\int_0^\infty e^{-\rho t}\bigg(\left(1+2\lambda N M_{rate}^{2}\frac{M_{aim}  }{\gamma}\right)\mu_t \varphi_t\\
&\qquad \quad  -\left(\frac{\gamma}{2}+\lambda N M_{rate}^{2}\right)\left(\varphi_t\right)^2  -\lambda N \left(\frac{M_{rate} M_{aim}  }{\gamma}\right)^{2}\mu_t^{2} \bigg)dt\Bigg].
\end{aligned}
\ee
In order to derive the explicit expression for $J$ it now remains to compute the integrals 
\begin{equation*}
\begin{aligned}
\mathcal{I}_{1} &=    \mathbb{E}\left[\int_0^\infty e^{-\rho t}\mu_t\varphi_t dt\right], \quad \mathcal{I}_{2} =  \mathbb{E}\left[\int_0^\infty e^{-\rho t}\left(\varphi_t\right)^{2}dt\right],  \quad \mathcal{I}_{3} =   \mathbb{E}\left[\int_0^\infty e^{-\rho t}\mu_t^{2}dt\right].
\end{aligned}
\end{equation*}
A straightforward calculation shows that
\begin{equation*}
\begin{aligned}
\mathcal{I}_{1} &=\frac{M_{rate} M_{aim}  }{\gamma}\frac{\sigma^{2}}{2\beta}  \int^{\infty}_{0}e^{-\rho t} \int^{t}_{0} e^{-M_{rate} (t-s)} \left(e^{-\beta |t-s|} -e^{-\beta (t+s)}\right)dsdt, \\
\mathcal{I}_{2} &=\left(\frac{M_{rate} M_{aim}  }{\gamma}\right)^{2}\frac{\sigma^{2}}{2\beta}\\
&\qquad\times \int^{\infty}_{0}e^{-\rho t}\int^{t}_{0}e^{-M_{rate} (t-u)} \int^{t}_{0}e^{-M_{rate} (t-s)} \left(e^{-\beta |u-s|} - e^{-\beta (u+s)}\right)dsdudt,\\
\mathcal{I}_{3} &=\frac{\sigma^{2}}{2\beta}  \int^{\infty}_{0}e^{-\rho t}(1-e^{-2\beta t})dt.
\end{aligned}
\end{equation*}
Here, have used that
\begin{equation*} \label{cov-ou} 
\begin{aligned}
 \mathbb{E}\left[\mu_t \mu_s\right] &= \text{cov}(\mu_t\mu_s) + \mathbb{E}\left[\mu_t\right]\mathbb{E}\left[\mu_s\right]=\frac{\sigma^{2}}{2\beta}\left(e^{-\beta |t-s|} - e^{-\beta(t+s)}\right).
\end{aligned}
\end{equation*}
Evaluation of the above integrals gives 
\begin{align*}
 \mathcal{I}_{1} &=\frac{M_{rate} M_{aim}  }{\gamma}\frac{\sigma^{2}}{\rho 
   (2 \beta +\rho ) (\beta +\rho +M_{rate} )}, \\
\mathcal{I}_{2 } &=\left(\frac{M_{rate} M_{aim}  }{\gamma}\right)^{2}\frac{2 \sigma^{2}}{\rho  (2 \beta +\rho )
   (\rho +2 M_{rate} ) (\beta +\rho +M_{rate} )}, \\
   \mathcal{I}_{3} &=
\frac{\sigma^{2}
  }{2
   \beta  \rho +\rho ^2}. 
\end{align*}
Plugging these expressions for $\mathcal{I}_{1}, \mathcal{I}_{2}$ and $\mathcal{I}_{3}$ into \eqref{ghh} in turn yields the asserted representation~\eqref{val-cl-f} from Theorem~\ref{thm-nash-closed-loop}.
  \end{proof}
  
  \subsection{Proof of Lemma \ref{lemma-system-value}} \label{sec-proof-sys-lem} 

This section contains the lengthy proof of Lemma~\ref{lemma-system-value}, which states that the system of the agents' optimality and consistency conditions has a solution with the properties required for our verification result, Proposition~\ref{thm-verification}.

As derived heuristically in Section~\ref{s:heuristics}, the solution of the ten algebraic equations can be reduced to finding the root of a single complicated but explicit function $\Phi_N(\lambda,\cdot)$:
\begin{equation}\label{eq-ift-phi}
\begin{aligned}
&\Phi_{N}(\lambda,y) \\
&=\frac{1}{\lambda  (N-1) (N+1)^2}\bigg(y \left((N+1) \sqrt{\gamma  \lambda } \left(\rho  \lambda  \left(N^2-1\right)-2 \Psi_N \right)+\gamma  \lambda  ((N-6) N+1) y\right)\\
&\quad -2 \Theta_N^2 (N-1)^2
   N+\Theta_N  (N-1) \left(N \Psi_N +8 N y \sqrt{\gamma  \lambda }+\Psi_N \right)\bigg),
\end{aligned}
\end{equation}
where
\begin{equation*}
\Psi_N(\lambda,y)=\frac{2 (N-1) \left(\Theta_{N} -\Theta_{N}  N+2 y \sqrt{\gamma  \lambda }\right)^2}{(N+1) \left(\rho  \lambda  \left(N^2-1\right)+4 \Theta_{N}  (N-1)+2 (N-3) y \sqrt{\gamma 
   \lambda }\right)}
\end{equation*}
and
\begin{align}
& \Theta_N(\lambda,y) \notag\\
&=\frac{1}{4N^{2}}\Bigg[\lambda  (N+1)^2 \left(\sqrt{\rho ^2+\frac{8 \gamma  N^2-4 \rho  (3 N+1) y \sqrt{\gamma  \lambda }}{\lambda  (N+1)^2}-\frac{4 \gamma  (3 N+1) y^2}{\lambda 
   (N-1) (N+1)^2}}-\rho \right) \notag\\
   &\qquad\qquad +(6 N+2) y \sqrt{\gamma  \lambda }\Bigg]. \label{eq-theta}
\end{align}
 Note that the functions $\Theta_{N}(\lambda,y)$ and $\Psi_{N}(\lambda,y)$  correspond to the functions $a(d)$ and $b(d)$, introduced in \eqref{eq-a-coeff} and \eqref{eq-b-coeff}, after substituting the change of variable $d=\sqrt{\lambda\gamma}y$.

We now show that $\Phi_N(\lambda,\cdot)$ has a root $\delta(\lambda)$ for sufficiently small $\lambda$. Here, the limiting function $\Phi_N(0,\cdot)$ only depends on the number $N$ of agents, but not the model parameters. Its roots are also roots of a polynomial of order three, which leads to three candidates for the roots of $\Phi_N(0,\cdot)$. However, using symbolic calculations, we can show that only one of these is a root of $\Phi_N(0,\cdot)$. By verifying that the relevant derivative in this limiting point does not vanish, we can then extend existence (and, essentially also uniqueness) to sufficiently small $\lambda$ by means of the implicit function theorem.
 
 \begin{lemma}
\label{prop-asymptotics-ift}
Define $\Phi_{N}(\lambda,y)$ as in \eqref{eq-ift-phi}. Then, there exists an open set $\Lambda_{N} \subset \mathbb{R}$ containing $\lambda =0$ such that:
\begin{itemize}
\item[\textbf{(i)}]  there exists a unique point $\delta^{*}_{N}$ such that $\Phi_{N}(0, \delta^{*}_{N})=0$; it satisfies 
\begin{equation}  \label{y-str-bnd} 
\delta^{*}_{N} \in\left(0,\sqrt{2} N \sqrt{\tfrac{N-1}{3 N+1}}\right];
\end{equation}
\item [\textbf{(ii)}] there exists a unique continuously differentiable function $ \delta_{N}:\Lambda_{N}
\to\mathbb{R}$  such that 
\begin{equation*}
\Phi_{N}(\lambda, \delta_{N}(\lambda))=0, \quad \textrm{for all } \lambda\in\Lambda_{N}, \quad \mbox{where  $ \delta_{N}(0) =  \delta^{*}_{N}$.}
\end{equation*}

\end{itemize}
 \end{lemma}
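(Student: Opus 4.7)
The plan is to treat parts (i) and (ii) in sequence. For (i), the crucial observation is that, although $\Theta_N$ as defined in \eqref{eq-theta} contains $1/\lambda$ terms inside its square root, the outer prefactor $\lambda(N+1)^2$ combines with them to produce an expression of order $\sqrt{\lambda}$ as $\lambda\downarrow 0$. Passing to the limit in \eqref{eq-ift-phi} therefore yields an explicit function $\Phi_N(0,y)$ depending only on $y$ and $N$, in which $\gamma$ and $\rho$ scale out via the rescaling \eqref{cov-d-h}. The upper endpoint $\sqrt{2}N\sqrt{(N-1)/(3N+1)}$ in \eqref{y-str-bnd} is precisely the positive value of $y$ at which the leading-order expression under the inner square root of $\Theta_N$, namely (up to positive constants) $2N^2(N-1)-(3N+1)y^2$ divided by $N-1$, vanishes. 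Any real root of $\Phi_N(0,\cdot)$ must therefore lie in $[-\sqrt{2}N\sqrt{(N-1)/(3N+1)},\sqrt{2}N\sqrt{(N-1)/(3N+1)}]$.

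After clearing denominators and squaring the remaining surd stemming from $\Theta_N$, the equation $\Phi_N(0,y)=0$ reduces to a cubic polynomial in $y$, which is the polynomial referenced in Remark \ref{algebra-comment}. I would apply Cardano's formulas to write down its three explicit roots as functions of $N$. Since squaring can introduce extraneous solutions, each candidate must then be substituted back into the original, unsquared, expression $\Phi_N(0,\cdot)$; leaning on the Mathematica companion cited in Remark \ref{algebra-comment}, I would verify that exactly one Cardano root, call it $\delta^{*}_N$, genuinely annihilates $\Phi_N(0,\cdot)$ and lies in the half-open interval $(0,\sqrt{2}N\sqrt{(N-1)/(3N+1)}]$. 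Strict positivity should follow from a direct evaluation of $\Phi_N(0,0)\neq 0$ together with a continuity and sign argument on the intervening interval, ruling out the remaining two Cardano candidates.

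For (ii) the machinery is the implicit function theorem applied at $(0,\delta^{*}_N)$. I first need smoothness of $\Phi_N$ in a neighbourhood of this point, which requires that the argument of the square root in $\Theta_N$ stays strictly positive and that the denominator of $\Psi_N$ stays nonzero near $(0,\delta^{*}_N)$; the former follows from the strict inequality $(\delta^{*}_N)^2<2N^2(N-1)/(3N+1)$ guaranteed by the half-open interval in part (i), and the latter is a direct substitution check. The central calculation is then $\partial_y\Phi_N(0,\delta^{*}_N)\neq 0$. I would evaluate this derivative by implicit differentiation of the cubic polynomial identity from part (i) and substituting the Cardano expression for $\delta^{*}_N$, again using symbolic computation to confirm non-vanishing. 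The implicit function theorem then delivers the open neighbourhood $\Lambda_N\ni 0$ and the unique continuously differentiable branch $\delta_N(\cdot)$ with $\delta_N(0)=\delta^{*}_N$.

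The main obstacle is the symbolic step at $\lambda=0$: one must decide which Cardano root actually solves the pre-squared equation $\Phi_N(0,\cdot)=0$, and then confirm the non-vanishing of $\partial_y\Phi_N(0,\delta^{*}_N)$. The closed-form expressions involved are complicated enough, and typically expressed in complex-valued radicals, that only symbolic manipulation rather than hand calculation seems feasible. This is exactly the difficulty flagged in Remark \ref{algebra-comment}, which is also why only implicit, rather than explicit, properties of $\delta^{*}_N$ will need to be exploited in the subsequent verification arguments.
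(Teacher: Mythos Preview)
Your plan matches the paper's proof almost exactly: pass to the $\lambda\downarrow 0$ limit in \eqref{eq-ift-phi}, isolate and square out the surd coming from $\Theta_N$ to obtain a polynomial equation, invoke Cardano together with symbolic verification to identify the unique genuine root, and for (ii) apply the implicit function theorem after symbolically checking $\partial_y\Phi_N(0,\delta^*_N)\neq 0$.

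Two small points are worth correcting. First, after squaring, the resulting polynomial is cubic in $y^2$ (equivalently, an even sextic in $y$), not a cubic in $y$; the paper writes it as $P_N(y^2)$ and applies Cardano to that, then takes the positive square root. This is because in the limiting numerator $\Gamma(N,y)$ the terms multiplying the surd are even in $y$ while the remaining terms are odd, so squaring produces an even polynomial. Second, for locating a root in the interval the paper does not argue via $\Phi_N(0,0)\neq 0$ and a sign change of $\Phi_N(0,\cdot)$; instead it applies the intermediate value theorem directly to $P_N$, checking $P_N(0)=-(N-1)^6(3N+1)^2<0$ and $P_N\bigl(2N^2\tfrac{N-1}{3N+1}\bigr)>0$. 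The symbolic step then confirms that only one of the three polynomial roots survives substitution back into $\Phi_N(0,\cdot)$.
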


\begin{proof}
(i) Our goal is to show that there is a solution of
\begin{equation}
\label{eq-phi-equal-zero}
\Phi_{N}(\lambda,y)=0
\end{equation}
for sufficiently small $\lambda \downarrow 0$. In this asymptotic regime, to first order \eqref{eq-phi-equal-zero} becomes
\begin{equation}
\label{eq-asymptotic-phi}
\gamma^{3/2}\frac{\Gamma(N,y)}{\Xi(N,y)}  + \mathcal{O}(\lambda^{1/2})= 0,
\end{equation}
where, 
\begin{equation} \label{Gamma}
\begin{aligned}
\Gamma(N,y) = &-3 N^7 \sqrt{\tfrac{2 N^3-2 N^2-3 N y^2-y^2}{(N-1) (N+1)^2}}+4 N^6 y^3+8 N^6 y^2 \sqrt{\tfrac{2 N^3-2 N^2-3 N y^2-y^2}{(N-1) (N+1)^2}}\\
&+5 N^6 \sqrt{\tfrac{2 N^3-2 N^2-3 N y^2-y^2}{(N-1) (N+1)^2}}+4 N^5 y^2 \sqrt{\tfrac{2 N^3-2 N^2-3 N y^2-y^2}{(N-1) (N+1)^2}}\\
   &+2 N^5 \sqrt{\tfrac{2 N^3-2 N^2-3 N y^2-y^2}{(N-1) (N+1)^2}}-24 N^4 y^3-20
   N^4 y^2 \sqrt{\tfrac{2 N^3-2 N^2-3 N y^2-y^2}{(N-1) (N+1)^2}}\\
   &-6 N^4 \sqrt{\tfrac{2 N^3-2 N^2-3 N y^2-y^2}{(N-1) (N+1)^2}}-8 N^3 y^3-14 N^3 y^2 \sqrt{\tfrac{2 N^3-2
   N^2-3 N y^2-y^2}{(N-1) (N+1)^2}}\\
   &+N^3 \sqrt{\tfrac{2 N^3-2 N^2-3 N y^2-y^2}{(N-1) (N+1)^2}}+18 N^2 y^3+10 N^2 y^2 \sqrt{\tfrac{2 N^3-2 N^2-3 N y^2-y^2}{(N-1)
   (N+1)^2}}\\
   &+N^2 \sqrt{\tfrac{2 N^3-2 N^2-3 N y^2-y^2}{(N-1) (N+1)^2}}+10 N y^2 \sqrt{\tfrac{2 N^3-2 N^2-3 N y^2-y^2}{(N-1) (N+1)^2}}\\
   &+2 y^2 \sqrt{\tfrac{2 N^3-2 N^2-3 N
   y^2-y^2}{(N-1) (N+1)^2}}-4 N^7 y+17 N^6 y-10 N^5 y-12 N^4 y+6 N^3 y\\
   &+3 N^2 y+12 N y^3+2 y^3,
   \end{aligned}
\end{equation}
and
\begin{equation}
\label{eq-XiNy}
\begin{aligned}
\Xi(N,y) = 4 (N-1) N^4 \Bigg(&N^2 \sqrt{\tfrac{\gamma  \left(2 N^3-2 N^2-3 N y^2-y^2\right)}{  (N-1) (N+1)^2}}\\&
-\sqrt{\tfrac{\gamma  \left(2
   N^3-2 N^2-3 N y^2-y^2\right)}{  (N-1) (N+1)^2}}+\sqrt{\gamma } N^2 y-\sqrt{\gamma } N y-\sqrt{\gamma } y\Bigg).
\end{aligned}
\end{equation}
From \eqref{eq-asymptotic-phi} it follows that we can set $\Gamma(N,y)=0$ to find the roots of \eqref{eq-phi-equal-zero} when $\lambda =0$.  
We move all the terms with the square root to one side of the equality, and the rest of them to the other side of the equality. Then, squaring both sides and reorganizing gives 
\begin{equation}
2 N^6 P_N(y)=0, 
\end{equation}
for a polynomial of order three in $y^2$, whose coefficients depend on the number $N$ of agents but not the other model parameters: 
\begin{equation}
\label{eq-polynomial}
\begin{aligned}
P_N(y^2) = &8 N^6 (y^2)^3+\left(-16 N^7+4 N^6+48 N^5-52 N^4+8 N^3+20 N^2-8 N-4\right) (y^2)^2\\ &+\left(8 N^8-20 N^7+46 N^6-112 N^5+114 N^4-4 N^3-46 N^2+8 N+6\right) y^2\\& -9 N^8+48 N^7-100 N^6+96
   N^5-30 N^4-16 N^3+12 N^2-1.
\end{aligned}
\end{equation}
The polynomial $P_N(y^2)$ has a root in the interval $(0,2 N^2 \frac{N-1}{3 N+1})$. This follows from the intermediate value theorem, since $N\geq 2$ and in turn
\begin{equation}
\begin{aligned}
\label{eq-interval-points-poly}
P_N(0) &= -(N-1)^6 (3 N+1)^2 <0, \\
 P_N\left(2 N^2\tfrac{N-1}{3 N+1}\right) &=\tfrac{(N-1)^3 (N (N (N (N (4 N (2 N-3)-13)+6)+12)+6)+1)^2}{(3 N+1)^3} >0.
\end{aligned}
\end{equation}
The equalities in \eqref{eq-interval-points-poly} can be verified by expanding the products on the right-hand sides of \eqref{eq-interval-points-poly} and comparing the corresponding result with $P_N(0)$ and $P_N\left(2 N^2\tfrac{N-1}{3 N+1}\right)$ obtained by using \eqref{eq-polynomial}.
In fact, the cubic polynomial $P_N$ has three real roots, but not all of these are roots of $\Gamma(\cdot ,N)$. Using symbolic calculations detailed in the Mathematica companion of this paper, it can be verified that only one root $\delta^{*}_{N}$ of $P_N$ satisfies $\Phi_{N}(0,\delta^{*}_{N})=0$. This completes the proof of~(i). 

(ii) Again using symbolic computations, it can be verified that $\partial_{y}\Phi_{N}(0,\delta^{*}_{N})\neq0$. (More details can be found in the Mathematica companion of this paper.) Then, by the Implicit Function theorem there exists $b_{N}>0$ and a function $\delta_{N}:[0,b_N) \to\mathbb{R}$, such that $\Phi_{N}(\lambda,\delta_{N}(\lambda))=0$ for $\lambda \in [0,b_{N})$ as desired.
\end{proof}

Now, we return to the proof of Lemma \ref{lemma-system-value}. To this end, we first introduce the following auxiliary lemma, which is proved separately in Appendix \ref{app-e-exp}.

\begin{lemma} \label{lemma-h4-positive}
For all sufficiently small $\lambda$, $e=e(\lambda)$ is well defined and strictly positive. 
\end{lemma}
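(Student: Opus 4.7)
The plan is to treat $e$ and $f$ as the solution of the two-dimensional linear system obtained from the optimality conditions \eqref{eq-e-system}--\eqref{eq-f-system} after substituting the representations $a(d)$, $b(d)$ given in \eqref{eq-a-coeff}--\eqref{eq-b-coeff}, the expressions $\bar b(d)$, $\bar c(d)$ from \eqref{eq-b-bar-system}--\eqref{eq-c-bar-system}, and $\bar a = e/((N+1)\lambda)$ from \eqref{eq-a-bar-system}. Writing this system in the form
\[
A(\lambda)\begin{pmatrix} e \\ f \end{pmatrix} = \begin{pmatrix} 1 \\ 0 \end{pmatrix},
\]
the entries of the $2\times 2$ matrix $A(\lambda)$ are rational functions of $d(\lambda)$, $a(d(\lambda))$, $b(d(\lambda))$. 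Using $d(\lambda) = \sqrt{\gamma\lambda}\,\delta_N^{*} + \mathcal{O}(\lambda^{3/2})$ from \eqref{eq-d-expansion} together with the expansion of $a$ from \eqref{a-exp} and the analogous expansion of $b(d)$ derived from \eqref{eq-b-coeff}, one obtains asymptotic expansions of all entries of $A(\lambda)$ in powers of $\sqrt{\lambda}$.

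Once the expansion of $A(\lambda)$ is in hand, I would apply Cramer's rule to extract $e(\lambda) = \det A_1(\lambda)/\det A(\lambda)$, where $A_1$ denotes the matrix with the first column replaced by $(1,0)^\top$. First I would compute $\det A(0)$ and show it is nonzero (so $A$ is invertible for sufficiently small $\lambda$, which in particular makes $e$ well defined); this is where the analysis of the leading-order terms is unavoidable but the terms remaining are finite combinations of $N$, $\beta$, $\rho$ and $\delta_N^{*}$. Next, I would compute the leading-order term of $e(\lambda)$ as $\lambda \downarrow 0$. Since the right-hand side of \eqref{eq-e-system} contains the constant $1$ (the coefficient of $\mu \varphi^1$ in the running reward), one expects $e$ to be of order one at leading order, i.e.\ $e(0) = e^* + \mathcal{O}(\sqrt\lambda)$ for some constant $e^*$ depending only on $N,\beta,\rho$ and $\delta_N^*$.

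The main obstacle is showing $e^{*}>0$. A purely symbolic attack is hopeless because $\delta_N^*$ is only determined implicitly as a root of the cubic \eqref{eq-polynomial}; any closed-form Cardano expression is too unwieldy to manipulate directly. The strategy I would follow is to combine two ingredients. First, use the bound \eqref{y-str-bnd}, namely $\delta_N^{*} \in (0,\sqrt 2 N\sqrt{(N-1)/(3N+1)}]$, which guarantees that the quantity $\overline D$ in \eqref{d-ol} is a nonnegative real number, so that all the leading-order expressions (such as $a$ in \eqref{a-exp}) have a well-defined sign. Second, feed the polynomial identity $P_N((\delta_N^*)^2)=0$ into the expression for $e^*$ to eliminate high powers of $\delta_N^*$ and reduce $e^*$ to a rational expression in $N,\beta,\rho,\delta_N^*$, which can then be checked to be strictly positive on the interval \eqref{y-str-bnd}. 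This mirrors the authors' stated methodology (cf.\ Remark~\ref{algebra-comment}) of using implicit properties of $\delta_N^*$ rather than its explicit form.

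Finally, I would record that $e(\lambda)=e^{*}+\mathcal{O}(\sqrt\lambda)$ with $e^{*}>0$ implies $e(\lambda)>0$ for all sufficiently small $\lambda>0$, completing the proof. The hardest step in this program is the positivity of $e^{*}$: it is the one place where the full algebraic cancellation between the solution of the linear system and the defining polynomial of $\delta_N^{*}$ must be exploited, and where delegating to the Mathematica companion (as the authors do elsewhere in the paper) is likely to be necessary to carry out the routine but lengthy simplifications.
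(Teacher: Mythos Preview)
Your proposal contains a genuine scaling error that invalidates the argument. You assert that ``one expects $e$ to be of order one at leading order, i.e.\ $e(0)=e^{*}+\mathcal{O}(\sqrt{\lambda})$'', but in fact $e(\lambda)=\mathcal{O}(\sqrt{\lambda})$. To see this, recall that after substituting $\bar a=e/((N+1)\lambda)$ into \eqref{eq-e-system}, the term $\tfrac{(e-\lambda(N-1)\bar a)(a+\lambda(N-1)\bar b)}{2\lambda}$ simplifies to a multiple of $e\bar c$, and $\bar c$ (like $\bar b$) is of order $1/\sqrt{\lambda}$ by \eqref{eq-b-bar-system}--\eqref{eq-c-bar-system} and \eqref{a-exp}. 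Consequently the entries of your matrix $A(\lambda)$ diverge like $1/\sqrt{\lambda}$ as $\lambda\downarrow 0$; there is no finite ``$\det A(0)$'' to check, and the leading constant $e^{*}$ you would extract equals zero, giving no information on the sign of $e$.

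The paper handles this by working instead with $h_4:=1/e$ (see \eqref{eq-solution-e-f-c}), which has the expansion $h_4=\sqrt{\gamma/\lambda}\,h_4^{0}+\mathcal{O}(1)$ for an explicit $h_4^{0}$ depending only on $N$ and $\delta_N^{*}$ (Lemma~\ref{lemma-h4-expansion}). The positivity of $h_4^{0}$ is \emph{not} obtained by bounding it on the interval \eqref{y-str-bnd}; rather, the paper establishes the exact identity $h_4^{0}=\tfrac{1}{(N+1)\Delta(N)}$ (Lemma~\ref{lemma-identities-h04-h14}), proved by showing that the difference $\chi_N(\delta_N^{*}):=h_4^{0}-\tfrac{1}{(N+1)\Delta(N)}$ vanishes precisely because $\delta_N^{*}$ satisfies the defining relation $\Gamma(N,\delta_N^{*})=0$ (equivalently $P_N((\delta_N^{*})^2)=0$). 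Since $\Delta(N)>0$ from \eqref{delta-n}, this immediately gives $h_4>0$ and hence $e=1/h_4>0$ for small $\lambda$. Your idea of ``feeding the polynomial identity into the expression'' is the right instinct, but it must be applied to the leading coefficient of $h_4$ (equivalently, to the $\sqrt{\lambda}$-coefficient of $e$), not to a putative $\mathcal{O}(1)$ limit of $e$.
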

  
We can now carry out the proof of Lemma \ref{lemma-system-value}:

\begin{proof}[Proof of Lemma \ref{lemma-system-value}]
(i) By reverting the steps in Section~\ref{s:heuristics} we recover the parameters~$(a,b,c,d,e,f,g,\bar{a},\bar{b},\bar{c})$ in terms of $d$. We then find a solution $d=d(\lambda, N)$ to \eqref{eq-expanded-d} for $N\geq 2$ and sufficiently small $\lambda$ as described in \eqref{cov-d-h}--\eqref{d-dl} using Lemma \ref{prop-asymptotics-ift}. Moreover, it follows that there exists a neighbourhood $\Lambda_{N}$ of $0$ such that 
\begin{equation}
\label{eq-phi-n-Y}
\Phi_{N}(\lambda,\delta_{N}(\lambda))=0,\quad  \textrm{for all } \lambda \in \Lambda_{N},
\end{equation}
which implies the existence of a solution $d=d(\lambda, N)$ to \eqref{eq-expanded-d} for sufficiently small $\lambda$.

(ii) Recall that by Lemma~\ref{prop-asymptotics-ift}(i), $\delta^{*}_{N}\in (0,\sqrt{2} N \sqrt{\frac{N-1}{3 N+1}}]$. The Implicit Function theorem as applied in the proof of Lemma~\ref{prop-asymptotics-ift} yields
\begin{equation} \label{y-exp} 
\begin{aligned}
\delta_{N}(\lambda) =\delta_{N}^{*}  + \mathcal{O}(\lambda).
\end{aligned}
\end{equation}
Using \eqref{cov-d-h} and \eqref{y-exp} we find that 
\begin{equation}\label{eq-d-expansion}
\begin{aligned}
d&=\sqrt{\lambda\gamma}\delta_{N}=\sqrt{\lambda\gamma}\delta_{N}^{*} + \mathcal{O}(\lambda^{3/2}).
\end{aligned}
\end{equation} 
Thus since $\delta_{N}^{*}>0$, \eqref{eq-d-expansion} shows that  $d>0$ for sufficiently small $\lambda$. Recall that in \eqref{eq-a-coeff} $a$ was defined in terms of $d$. Plugging \eqref{cov-d-h} in  \eqref{eq-a-coeff} for the parameter $d$, and expanding around $\lambda>0$ gives
\begin{equation} \label{a-exp} 
\begin{aligned}
&a \\
&= \frac{1 }{4 N^2} \bigg( (6 N+2) \sqrt{\gamma  \lambda } \delta_N \\ 
&\quad +\lambda  (N+1)^2 \bigg( \sqrt{\frac{8 \gamma  N^2}{\lambda  (N+1)^2}-\frac{4 (3 N+1) \rho  \sqrt{\gamma  \lambda } \delta_N}{\lambda 
   (N+1)^2}-\frac{4 \gamma  (3 N+1) \delta_N^2}{\lambda  (N-1) (N+1)^2}+\rho ^2}-\rho \bigg) \bigg)\\
&=\sqrt{\gamma \lambda } \Bigg(\frac{1}{2}\sqrt{\frac{2 N^3-2 N^2-3 N (\delta_N^{*})^2-(\delta_N^{*})^2}{(N-1) (N+1)^2}} \\
&\qquad\qquad +\frac{1}{N} \sqrt{\frac{2 N^3-2 N^2-3 N
   (\delta_N^{*})^2-(\delta_N^{*})^2}{(N-1) (N+1)^2}}\\
   &\qquad\qquad+\frac{1}{2N^{2}}\sqrt{\frac{2 N^3-2 N^2-3 N (\delta_N^{*})^2-(\delta_N^{*})^2}{(N-1) (N+1)^2}}+ \frac{3N+1}{2N^{2}}\delta_N^{*}\Bigg) + \mathcal{O}(\lambda) \\
  &=\sqrt{\gamma \lambda } \Bigg(\ol D\frac{(N+1)^{2}}{2N^{2}} +\frac{3N+1}{2N^{2}}\delta_N^{*}\Bigg) + \mathcal{O}(\lambda),
 \end{aligned}
\end{equation}
where
\be \label{d-ol} 
\ol D= \sqrt{\frac{2 N^3-2 N^2-3 N (\delta^{*}_N)^2-(\delta_N^*)^2}{(N-1) (N+1)^2}}. 
\ee
Then, for sufficiently small $\lambda$, the arguments of all the square roots above are positive so that $a$ is indeed well defined and positive. 
Next, note that for $\delta^{*}_{N}\in (0,\sqrt{2} N \sqrt{\frac{N-1}{3 N+1}}]$ we have 
\be \label{n-y-ineq} 
2 N^3-2 N^2-3 N (\delta_N^{*})^2-(\delta_N^{*})^2\geq 0. 
\ee
Using \eqref{eq-b-bar-system}, \eqref{eq-c-bar-system} ,  \eqref{eq-a-coeff}  and \eqref{cov-d-h} we get that
\begin{equation}
\label{eq-dbar-inequality}
\bar{c} - (N-1)\bar{b} = -\frac{(N-1) \left(-a N+a+2 \sqrt{\gamma  \lambda } \delta_N\right)}{\lambda -\lambda  N^2}-\frac{\sqrt{\gamma  \lambda } \delta_N-a N}{\lambda
   +\lambda  N}. 
\end{equation}
Together with $a$ given in terms of $\delta_{N}$ by \eqref{a-exp},  we obtain a power series expansion for the left-hand side of \eqref{eq-dbar-inequality} around $\lambda=0$:
\begin{equation} \label{constrnt-eq} 
\bar{c} - (N-1)\bar{b}  = \sqrt{\frac{\gamma}{\lambda}}\Delta(N) + \mathcal{O}(1),
\end{equation}
where $\Delta (N)>0$ was defined in \eqref{delta-n}. Therefore, for sufficiently small $\lambda$ it holds that
\be \label{constr-c} 
\bar{c} - (N-1)\bar{b}>0.
\ee  
It remains to verify \eqref{constr}, it now remains to prove that $ \bar a>0$. In view of \eqref{eq-a-bar-system}, we have
 $$
\bar{a} = \frac{e}{(1+N)\lambda}  .
$$
so it suffices to establish $e>0$. From Lemma  \ref{lemma-h4-positive} it follows that for sufficiently small $\lambda$, the constant $e$ is strictly positive and well defined. Therefore, the constant $\bar a$ is also strictly positive and well defined.

It remains to prove that $(b,c,f,g)$ are well defined. Since $a,d>0$ for sufficiently small $\lambda$, $b$ in \eqref{eq-b-coeff} is well defined. From the expression for $f$ in \eqref{eq-solution-e-f-c} it follows that in order to verify that $f$ is well defined we need to show that all the denominators appearing in \eqref{eq-solution-e-f-c} are different from zero when $\lambda$ is small, namely 
\begin{align} 
h_3(\lambda)-\beta-\rho \neq  & 0, \label{h3-eq}  \\
h_4(\lambda)  \lambda  (N+1)^2 \left(4 a (N-1)+2
   d (N-3)+\lambda  \left(N^2-1\right) \rho \right) \neq & 0. \label{h4-eq}
\end{align} 
In view of Lemma \ref{lemma-h4-positive} and since $e= 1/h_{4}$ by \eqref{eq-solution-e-f-c}, we have $h_4(\lambda)>0$, $a>0$, $d>0$ for sufficiently small $\lambda$ and in turn \eqref{h4-eq}. (The case $N=2$ needs to be handled separately -- in this case, it follows from \eqref{eq-a-coeff} that for $\lambda$ small enough we have $a\geq d/2$, which together with $h_4(\lambda)>0$ gives \eqref{h4-eq}.) 

Using the expression for $h_{3}$ in \eqref{eq-h-constants} together with \eqref{eq-d-expansion} and \eqref{a-exp} we find that for sufficiently small $\lambda$ we have 
\begin{equation*}
\begin{aligned} 
h_{3} &= \sqrt{\frac{\gamma}{\lambda}}\left(\frac{\left(-N^2+N+1\right) \delta_N^{*}-\left(N^2-1\right) \sqrt{\frac{  \left(2 N^3-2 N^2-3 N (\delta_N^{*})^2-(\delta_N^{*})^2\right)}{(N-1) (N+1)^2}}}{
   (N-1) N^2}\right) + \mathcal{O}(1)<0,
\end{aligned} 
\end{equation*}
 where we also used \eqref{n-y-ineq} in the last step. This proves \eqref{h3-eq}, so that $f$ is well defined.  
 Next note that $g$ is well defined if $c$ is well defined, hence in order to complete the proof we need to prove that $c$ is well defined. From \eqref{eq-c-system} we have 
\be \label{c-interms} 
 c = \left(\frac{\rho + 2\beta}{2}\right)^{-1} \Big( f\bar{a} + \frac{(e-\lambda(N-1)\bar{a})^{2}}{4\lambda} \Big),
 \ee
 since we already showed that $(e,f,\bar{a})$ are well defined, the result follows.
 \end{proof}
 
\begin{remark}
\label{rmk-uniqueness}
Note that there exists a unique solution to the system \eqref{eq-a-system}--\eqref{eq-c-bar-system} for which \eqref{constr} and  \eqref{y-str-bnd} are satisfied. Specifically, this unique solution can be characterised as the unique root of the polynomial \eqref{eq-polynomial}  satisfying \eqref{y-str-bnd} while being also the root of  \eqref{eq-asymptotic-phi}. This solution identifies a unique closed-loop Nash equilibrium in the class of linear strategies.  The polynomial \eqref{eq-polynomial} may have other roots, than the one corresponding to the closed-loop Nash equilibrium, however, they either do not satisfy \eqref{y-str-bnd} or are not a root of  \eqref{eq-asymptotic-phi}.  We recall that we have already initially excluded all the solutions corresponding to the other root of \eqref{eq-a-coeff} as the corresponding solution did not lead to the correct sign in at least one of the parameters among $\bar{a},\bar{b}$ and $\bar{c}$ (see the footnote before \eqref{eq-a-coeff}). 
\end{remark}

\appendix
\section{Technical Proofs}

This appendix collects a number of elementary but cumbersome proofs used in the body of the paper.

\subsection{ Proof of Proposition \ref{prop-asymptotic-cl} and Lemma \ref{lemma-h4-positive}} \label{sec-cl-asymp} 
\label{app-e-exp}


Before proving  Lemma \ref{lemma-h4-positive}, we introduce two intermediate lemmas which will help us to disentangle the dependence of the parameter $e$ on $\lambda$.

First, we derive the small-$\lambda$ asymptotics of $h_4$ from \eqref{eq-h-constants}.

 \begin{lemma}
\label{lemma-h4-expansion}
For all $\lambda$ sufficiently small we have,
\begin{equation} \label{h-4-exp} 
h_{4} = \sqrt{\frac{\gamma}{\lambda}}h^{0}_{4} + \mathcal{O}(1),
\end{equation}
where
\begin{equation}
\label{eq-h04}
\begin{aligned}
h^{0}_{4} &= \frac{\ol{D}(\delta_{N}^{*})+\delta_{N}^{*}}{N}\\ &\quad -\frac{(N+1) (\delta_{N}^{*}-\ol{D}(\delta_{N}^{*}) (N-1))^2 }{8 N^2 \left(\bar{D
}(\delta_{N}^{*})   (N-1)^2+\left(N^2-N-1\right) \delta_{N}^{*}\right)^2}\\
&\quad \times \left(\ol{D}(\delta_{N}^{*}) (N-1) \left(N^2+2 N-3\right)+\left(3 N^2-4 N-3\right) \delta_{N}^{*}\right).
\end{aligned}
\end{equation}
\end{lemma}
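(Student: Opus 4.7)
My plan is to obtain \eqref{h-4-exp} by plugging the already-established expansions of $d$ and $a$ into the explicit rational formula for $h_4$ and then Taylor expanding around $\lambda = 0$. The starting point is the representation of $e$ (and hence $h_4 = 1/e$) produced in \eqref{eq-solution-e-f-c} from solving the linear system \eqref{eq-e-system}--\eqref{eq-f-system} for $e,f$ with $(a,b,d,\bar{a},\bar{b},\bar{c})$ expressed in terms of $d$ via \eqref{eq-a-bar-system}--\eqref{eq-c-bar-system}, \eqref{eq-a-coeff}, \eqref{eq-b-coeff}, together with $\bar{a} = e/((N+1)\lambda)$. Substituting these and clearing denominators exhibits $h_4$ as an explicit rational function of $(d,a,b,N,\rho,\lambda)$ (with $\beta$ and $\rho$ only entering in terms that will be lower order).

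Next, I would insert the expansions already derived in the paper, namely $d = \sqrt{\gamma\lambda}\,\delta_N^* + \mathcal{O}(\lambda^{3/2})$ from \eqref{eq-d-expansion}, the leading-order expression of $a$ from \eqref{a-exp} in the form $a = \sqrt{\gamma\lambda}\,A_0(N,\delta_N^*) + \mathcal{O}(\lambda)$ with
\begin{equation*}
A_0(N,\delta_N^*) = \frac{(N+1)^2}{2N^2}\,\overline{D}(\delta_N^*) + \frac{3N+1}{2N^2}\,\delta_N^*,
\end{equation*}
and an analogous expansion of $b$ obtained by substituting into \eqref{eq-b-coeff}. Because each of $a,d$ is $\mathcal{O}(\sqrt{\lambda})$ while $\bar{b},\bar{c} = \mathcal{O}(1/\sqrt{\lambda})$, the dominant contributions to $h_4$ come from the quadratic-in-$\bar{b},\bar{c}$ combinations appearing in the coefficient of $e$ in \eqref{eq-e-system}; these produce the $\sqrt{\gamma/\lambda}$ scaling asserted in \eqref{h-4-exp}. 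The remaining task is to read off the $\lambda^{-1/2}$-coefficient and collect terms to match the claimed formula \eqref{eq-h04} for $h_4^0$.

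The technical core of the proof is algebraic: after substitution, the leading coefficient is a rational function of $\delta_N^*$ and $\overline{D}(\delta_N^*)$, and one must simplify it to the two-term expression \eqref{eq-h04}. The first summand $(\overline{D}+\delta_N^*)/N$ should come directly from the $f\bar{b}$ and $-d\bar{a}$ terms in \eqref{eq-e-system} after eliminating $f$ using \eqref{eq-f-system}, while the second summand (involving the squared ratio) should arise from the product $(e-\lambda(N-1)\bar{a})(a+\lambda(N-1)\bar{b})/(2\lambda)$ in \eqref{eq-e-system}, whose leading behaviour is governed by $(A_0 - (N-1)\overline{D}\cdot\text{(factor)})$-type combinations. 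Throughout, the definition $\overline{D}^2 = (2N^3-2N^2-(3N+1)(\delta_N^*)^2)/((N-1)(N+1)^2)$ from \eqref{d-ol} is used to rewrite all occurrences of the discriminant appearing in \eqref{eq-a-coeff}.

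The main obstacle will be the sheer algebraic bulk of the rational expression for $h_4$ and the need to simplify it to the specific closed form in \eqref{eq-h04}. Naive expansion generates many terms that only collapse after using the defining equation $\Phi_N(0,\delta_N^*) = 0$ (equivalently, the cubic $P_N(y^2)=0$) to eliminate higher powers of $\delta_N^*$ and $\overline{D}$. I would therefore carry out this simplification using the same symbolic computation pipeline already cited in the paper (the Mathematica companion used in the proof of Lemma \ref{prop-asymptotics-ift}), treating $\delta_N^*$ and $\overline{D}$ as algebraically related symbols and reducing modulo $\overline{D}^2 - (2N^3-2N^2-(3N+1)(\delta_N^*)^2)/((N-1)(N+1)^2)$. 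The $\mathcal{O}(1)$ remainder is then automatic from $C^1$-smoothness of $\delta_N(\lambda)$ provided by Lemma \ref{prop-asymptotics-ift}(ii) and smoothness of the rational map $d \mapsto h_4$ in a neighbourhood of the non-degenerate point $d = 0$ (where non-degeneracy is precisely the denominators $h_3 - \beta - \rho$ and the one displayed in \eqref{h4-eq}, both shown non-vanishing in the proof of Lemma \ref{lem-constr}).
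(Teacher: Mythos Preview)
Your overall strategy---substitute the known $\sqrt{\lambda}$-expansions of $a$ and $d$ and Taylor expand---is exactly the paper's, but you make the execution far more laborious than necessary. The paper does not go back to the linear system \eqref{eq-e-system}--\eqref{eq-f-system} and does not need $b$ at all: it works directly from the explicit formula for $h_4$ already recorded in \eqref{eq-h-constants}. That formula is naturally a sum of three pieces, $h_4 = h_{4,1} + h_{4,2} + (\beta+\rho)$, where $h_{4,1}=\frac{(N-1)d+2aN}{\lambda(N+1)^2}$ is the simple rational term and $h_{4,2}$ is the remaining fraction involving $h_2$. Plugging in the leading-order expansions \eqref{eq-d-expansion} and \eqref{a-exp} for $d$ and $a$ immediately gives $h_{4,1}=\sqrt{\gamma/\lambda}\,(\overline D+\delta_N^*)/N+\mathcal O(1)$ and $h_{4,2}$ equal to the second summand of \eqref{eq-h04} times $\sqrt{\gamma/\lambda}$ plus $\mathcal O(1)$; the additive $\beta+\rho$ is already $\mathcal O(1)$. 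The two displayed terms in \eqref{eq-h04} are therefore nothing but the leading coefficients of $h_{4,1}$ and $h_{4,2}$ respectively, with no further simplification required.

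In particular, the polynomial identity $\Phi_N(0,\delta_N^*)=0$ (equivalently $P_N((\delta_N^*)^2)=0$) plays no role in this lemma; that reduction is what drives the \emph{next} lemma (Lemma~\ref{lemma-identities-h04-h14}), where one must show $h_4^0=\frac{1}{(N+1)\Delta(N)}$. Your proposal conflates these two steps. Your route would still reach the answer, but via unnecessary detours through $b$ and the coupled $e$--$f$ equations, and with an anticipated algebraic collapse that in fact does not occur here.
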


\begin{proof}
From the expansions of $a$ and $d$ in \eqref{a-exp} and \eqref{eq-d-expansion} we get,
\begin{equation}
\label{eq-expansion-first-term}
\begin{aligned}
h_{4,1}=\frac{(N-1)d+2a N}{\lambda  (N+1)^2} &= \left(\frac{\ol{D}(\delta_{N}^{*})+ \delta_{N}^{*}}{N}\right)\sqrt{\frac{\gamma}{\lambda}} +\mathcal{O}(1).
\end{aligned}
\end{equation}
Define \begin{equation}
h_{4,2} = \label{eq-second-term-h4}
\frac{2 (2 d-(N-1)a)^2 \left(-a \left(N^2+2 N-3\right)+4 d-\lambda  \left(N^2-1\right) \rho \right)}{h_2 \lambda  (N+1)^2 \left(2h_2 -\lambda (N^{2}-1)(2\beta + \rho)\right)}. 
\end{equation}
We expand $h_{4,2}$ by using the power series of $a$ and $d$ and the definition of $h_{2}$ in \eqref{eq-h-constants} to obtain
 \begin{equation}
 \label{eq-expansion-upsilon}
 \begin{aligned}
 h_{4,2} &= -\sqrt{\frac{\gamma}{\lambda}}
\frac{(N+1) (\delta_N^{*}-\ol{D}(\delta_N^{*}) (N-1))^2 }{8 N^2 \left(\ol{D}(\delta_N^{*}) (N-1)
   (N+1)+\left(N^2-N-1\right)\delta_N^{*}\right)^2}\\ 
  &\qquad \times   \left(\ol{D}(\delta_N^{*}) (N+1) \left(N^2+2 N-3\right)+\left(3 N^2-4 N-3\right)\delta_N^{*}\right)
  +\mathcal{O}\left(1\right).
\end{aligned}
 \end{equation}
Plugging \eqref{eq-expansion-first-term} and \eqref{eq-expansion-upsilon} into~\eqref{eq-h-constants} then indeed yields 
 \begin{equation}
 h_{4} = h_{4,1}+h_{4,1} +\beta+\rho=  \sqrt{\frac{\gamma}{\lambda}}h_{4}^{0} + \mathcal{O}(1), 
 \end{equation}
 with $h_{4}^{0}$ defined in \eqref{eq-h04}.
\end{proof}
From Lemma \ref{lemma-h4-expansion} we observe that in order to prove Lemma \ref{lemma-h4-positive}, we need to derive the sign of  $h_4^0$, which in turn will give us the sign of $e=1/h_4$. We define the following function, 
\begin{equation}
\label{eq-chi-N}
\begin{aligned}
&\chi_{N}(y) \\
&= \frac{\ol{D}(y)+y}{N}
\\& \quad -\frac{(N+1) \left(\left(N^3+3 N^2-N-3\right) \ol{D}(y)+\left(3 N^2-4 N-3\right) y\right) \left(y-(N-1) \ol{D}(y)\right)^2}{8 N^2 \left(\left(N^2-1\right) \ol{D}(y)+\left(N^2-N-1\right)
   y\right)^2}\\&
   \quad -\frac{2 N^2}{(N+1) \left((N+1) \ol{D}(y)+(2 N+1) y\right)}.
\end{aligned}
\end{equation}
In the following lemma we argue that  
\begin{equation*}
h_{4}^{0} -\frac{1}{(N+1)\Delta(N)} = \chi_{N}(\delta_{N}^{*}), 
\end{equation*}
and we also show that $\chi_{N}(\cdot)$ vanishes once we plug-in $ \delta_{N}^{*}$. This is a key ingredient in the proof of the sign of $h_4$ as suggested by \eqref{h-4-exp}.

\begin{lemma}
\label{lemma-identities-h04-h14}
For $h_{4}^{0}$ as in \eqref{eq-h04} and $\Delta$ given by \eqref{delta-n}, we have
\begin{equation*}
 h_{4}^{0} = \frac{1}{(N+1)\Delta(N)}.
 \end{equation*}
\end{lemma}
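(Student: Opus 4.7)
The plan is to reduce the identity $h_4^0 = 1/[(N+1)\Delta(N)]$ to the cubic polynomial relation $P_N((\delta_N^*)^2)=0$ characterizing $\delta_N^*$, and then to verify the reduction by symbolic computation, in line with the Mathematica companion invoked elsewhere in the paper. The first move is cosmetic but clarifies the structure: using \eqref{d-ol} to identify $(N+1)\bar D(\delta_N^*) = \sqrt{(2N^3 - 2N^2 - (3N+1)(\delta_N^*)^2)/(N-1)}$, the defining formula \eqref{delta-n} rewrites as
\[
\Delta(N) = \frac{(N+1)\bar D(\delta_N^*) + (2N+1)\delta_N^*}{2N^2},
\]
so the target identity becomes $h_4^0 = 2N^2/\bigl\{(N+1)\bigl[(N+1)\bar D(\delta_N^*) + (2N+1)\delta_N^*\bigr]\bigr\}$. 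Inspecting \eqref{eq-chi-N}, the auxiliary function $\chi_N(y)$ is precisely the difference between the expression for $h_4^0$ in \eqref{eq-h04} with $\delta_N^*$ replaced by $y$ and the analogous expression for $1/[(N+1)\Delta(N)]$, so the lemma reduces to the scalar claim $\chi_N(\delta_N^*)=0$.

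Next I would place the three summands of $\chi_N(y)$ over the common denominator
\[
D_N(y) := 8N^3(N+1)\bigl[(N+1)\bar D(y) + (2N+1)y\bigr]\bigl[(N^2-1)\bar D(y) + (N^2-N-1)y\bigr]^2,
\]
expand the numerator, and systematically use the quadratic relation $(N-1)(N+1)^2 \bar D(y)^2 = 2N^3 - 2N^2 - (3N+1)y^2$ from \eqref{d-ol} to eliminate all even powers of $\bar D(y)$. This produces a decomposition
\[
\chi_N(y) = \frac{R_1(y) + R_2(y)\,\bar D(y)}{D_N(y)},
\]
with $R_1, R_2$ polynomials in $y$ containing only even powers, whose coefficients are explicit polynomials in $N$.

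The final step is to verify the vanishing $R_1(\delta_N^*) + R_2(\delta_N^*)\bar D(\delta_N^*)=0$. Writing $R_i(y)=\tilde R_i(y^2)$, the cleanest route is to exhibit each $\tilde R_i(z)$ as a polynomial multiple of $P_N(z)$ from \eqref{eq-polynomial} in $\mathbb{Q}(N)[z]$; polynomial long division then yields zero remainder and the identity follows. Should only the specific combination vanish rather than the two pieces separately, the fall-back is to square $R_1 = -R_2 \bar D$, use the explicit form of $\bar D^2$ to obtain a pure polynomial identity in $z$, and verify that this identity factors through $P_N(z)$ (using the sign information $\bar D(\delta_N^*)>0$ from \eqref{y-str-bnd} to pick the correct branch after squaring). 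Either route is a routine computer-algebra check that fits naturally alongside the Mathematica verification invoked in the proof of Lemma~\ref{prop-asymptotics-ift}.

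The main obstacle is purely combinatorial: after clearing denominators the expressions are enormous, and the coefficient polynomials in $N$ appearing in $\tilde R_i$ have high degree. The structural fact that makes the check succeed is that $h_4^0$ came from the linear equation \eqref{eq-e-system} for the coefficient $e$ after back-substituting $a(d)$, $b(d)$, and the consistency relations \eqref{eq-a-bar-system}--\eqref{eq-c-bar-system}, while $\Delta(N)$ came directly from those same consistency relations via \eqref{eq:varconst}; both thus depend on the single scalar root $\delta_N^*$ in exactly the coordinated way that forces the above divisibility by $P_N$. It is this shared provenance in the same HJB/Nash system that guarantees the hidden identity holds; exhibiting it explicitly is not conceptually hard but requires the bookkeeping be delegated to symbolic software.
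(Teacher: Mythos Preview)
Your proposal is correct and follows essentially the same route as the paper: both reduce the claim to $\chi_N(\delta_N^*)=0$, place $\chi_N$ over a common denominator, split the numerator into a polynomial part plus $\bar D$ times a polynomial (the paper's $\mathcal{R}_1+\mathcal{R}_2$), and then verify the vanishing via the defining relation for $\delta_N^*$ using symbolic computation. Your write-up is slightly more explicit about the verification mechanism (divisibility by $P_N$ or squaring), whereas the paper simply says ``direct substitution'' with reference to the Mathematica companion; one minor slip is your claim that both $R_1,R_2$ contain only even powers of $y$---in the paper's normalization $\mathcal{R}_2$ carries odd powers---but this does not affect the argument.
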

 \begin{proof}
By substituting the expressions for $h_{4}^{0}$ from \eqref{eq-h04} and $\Delta$ from \eqref{delta-n}, we obtain
\begin{equation*}
h_{4}^{0} -\frac{1}{(N+1)\Delta(N)} = \chi_{N}(\delta_{N}^{*}). 
\end{equation*}
We will show that for any $N\geq 2$, $\delta^{*}_{N}$  satisfies
\begin{equation}
\label{eq-chi-roots-YN}
\chi_{N}(\delta^{*}_{N}) =0, 
\end{equation}
which will prove the result. 
 
First, we aggregate all the terms appearing on the right hand side of \eqref{eq-chi-N} in one fraction, that is
\begin{equation*}
\chi_{N}(\delta^{*}_{N}) = \frac{\mathcal{S}_{1}(\delta^{*}_{N})}{\mathcal{S}_{2}(\delta^{*}_{N})},
\end{equation*} 
where, 
\begin{equation*}
\begin{aligned}
 \mathcal{S}_{1}(y) &= \mathcal{R}_{1} + \mathcal{R}_{2},\\
 \mathcal{R}_{1}(y) &= -4 N^9 y^2+4 N^8 y^4+47 N^8 y^2-6 N^7 y^4-52 N^7 y^2-66 N^6 y^4-90 N^6 y^2+34 N^5 y^4\\ &+108 N^5 y^2+144 N^4 y^4+39 N^4 y^2+12 N^3 y^4-36 N^3
   y^2-82 N^2 y^4-12 N^2 y^2-9 N^9\\&+33 N^8-42 N^7+18 N^6+3 N^5-3 N^4-42 N y^4-6 y^4,\\
   \mathcal{R}_{2}(y) &= 14 (N+1) N^7 y^3 \ol{D}-22 (N+1) N^6 y^3 \ol{D}-58 (N+1) N^5 y^3 \ol{D}+56 (N+1) N^4 y^3 \ol{D}\\&+68 (N+1) N^3 y^3 \ol{D}-22 (N+1) N^2
   y^3 \ol{D}-9 (N+1) N^8 y \ol{D}+44 (N+1) N^7 y \ol{D}\\&-54 (N+1) N^6 y \ol{D}-6 (N+1) N^5 y \ol{D}+37 (N+1) N^4 y \ol{D}-6 (N+1)
   N^3 y \ol{D},
   \\&-6 (N+1) N^2 y \ol{D}-30 (N+1) N y^3 \ol{D}-6 (N+1) y^3 \ol{D}, \\
    \mathcal{S}_{2}(y)&=2 (N-1) N^2 (N+1) (\ol{D} (N+1)+2 N y+y) \left(\ol{D} \left(N^2-1\right)+\left(N^2-N-1\right) y\right)^2.
 \end{aligned}
\end{equation*}
and we abbreviate $\ol{D}(y)$ to $\ol{D}$. Hence, in order to prove \eqref{eq-chi-roots-YN} we need to show that
\begin{equation*}
 \frac{\mathcal{S}_{1}(\delta^{*}_{N})}{\mathcal{S}_{2}(\delta^{*}_{N})} = 0, 
 \end{equation*}
which is implied by $\mathcal{S}_{1}(\delta^{*}_{N})= 0$ and $\mathcal{S}_{2}(\delta^{*}_{N})\neq 0$.  This is proved by a direct substitution and by Lemma~\ref{prop-asymptotics-ift}(i) which identifies $\delta^{*}_{N}$.  
\end{proof}

Now we are ready to prove Lemma \ref{lemma-h4-positive}. 

 \begin{proof}[Proof of Lemma \ref{lemma-h4-positive}] 
From Lemmas \ref{lemma-h4-expansion} and \ref{lemma-identities-h04-h14} it follows that for all sufficiently small $\lambda$,
\begin{equation} \label{h4-exp} 
h_{4} = \sqrt{\frac{\gamma}{\lambda}}h_{4}^{0} + \mathcal{O}(1) = \sqrt{\frac{\gamma}{\lambda}}\frac{1}{(N+1)\Delta(N)}+\mathcal{O}(1). 
\end{equation}
From \eqref{delta-n} we have that $\Delta(N)>0$. As argued in the paragraph after \eqref{eq-b-coeff}, an expression for $e(d)$ can be obtained by lengthy algebraic computations: $e=\frac{1}{h_{4}}$ (see \eqref{eq-solution-e-f-c}). This completes the proof.
\end{proof}

The following intermediate result is an important step in proving Proposition \ref{prop-asymptotic-cl}.

\begin{lemma} \label{lemma-alpha} 
For $\lambda $ sufficiently small we have
\begin{equation}
M_{rate} = \sqrt{\frac{\gamma}{\lambda}}\Delta(N)+ \mathcal{O}\left(1\right), \quad M_{aim}  = 1+ \mathcal{O}(\sqrt{\lambda}).
\end{equation}
for the nonnegative function $\Delta(N)$ defined in \eqref{delta-n}.
\end{lemma}
\begin{proof} 
The representation of $M_{rate} $ in \eqref{n3} and \eqref{constrnt-eq} yield
\begin{equation} \label{r-n-eq}
M_{rate}   = \sqrt{\frac{\gamma}{\lambda}}\Delta(N) + \mathcal{O}(1), 
\end{equation}
where $\Delta(N)$ is given by \eqref{delta-n} and does not depend on $\lambda$.
From \eqref{n3}, \eqref{eq-a-bar-system} and \eqref{eq-solution-e-f-c} we have 
\begin{equation}
\label{eq-a-bar-coefficient-h4}
\begin{aligned}
M_{aim} &= \gamma \frac{\bar{a}}{M_{rate} }= \gamma \frac{e}{ \lambda (N+1)M_{rate} }=  \frac{\gamma}{ \lambda (N+1)h_{4} M_{rate} }. 
 \end{aligned}
\end{equation} 
Using \eqref{r-n-eq} and \eqref{h4-exp} we obtain that, for $\lambda$ small enough,
\begin{equation} \label{m-aim-exp} 
\begin{aligned}
M_{aim}  &= \frac{\gamma}{\lambda(N+1)}\frac{1}{\left(\sqrt{\frac{\gamma}{\lambda}}h^{0}_{4}  + \mathcal{O}(1)\right)}\frac{1}{\left(\sqrt{\frac{\gamma}{\lambda}}\Delta(N) + \mathcal{O}(1)\right)}= \frac{1}{(N+1)h_{4}^{0}\Delta(N)} + \mathcal{O}(\sqrt{\lambda}).
\end{aligned}
\end{equation}
Finally, plugging in the result of Lemma \ref{lemma-identities-h04-h14} for $h_{4}^{0}$, we get  
\begin{equation}
M_{aim} = 1+ \mathcal{O}(\sqrt{\lambda})
\end{equation}
as asserted.
\end{proof}
\begin{proof}[Proof of Proposition \ref{prop-asymptotic-cl}]
The asymptotic expansions for $M_{rate} $ and $M_{aim}$ were already proved in Lemma \ref{lemma-alpha}.

In view of this result, it now remains to derive the asymptotics of the value function. 
For $J^{n}$ as in \eqref{val-cl-f} we define 
\be \label{j-cl-dc} 
J^{n}(\dot{\varphi}^{n},\dot{\varphi}^{-n}) = \bar w_{1}(\lambda) - \bar w_{2}(\lambda) -\bar w_{3}(\lambda), 
\ee
where 
\begin{equation*}
\begin{aligned}
\bar w_{1}(\lambda) &= \left(1+2\lambda N(M_{rate} )^{2}\frac{M_{aim}  }{\gamma}\right)\left(\frac{M_{rate} M_{aim}  }{\gamma}\right)\frac{\sigma^{2}}{\rho 
   (2 \beta +\rho ) (\beta +\rho +M_{rate} )},\\
\bar w_{2}(\lambda) &=  \lambda N \left(\frac{M_{rate} M_{aim}  }{\gamma}\right)^{2}\frac{\sigma^{2}
  }{2
   \beta  \rho +\rho ^2}, \\
\bar w_{3}(\lambda) &=\left(\frac{\gamma}{2}+\lambda N\left(M_{rate} \right)^{2}\right)\left(\frac{M_{rate} M_{aim}  }{\gamma}\right)^{2}\frac{2 \sigma^{2}}{\rho  (2 \beta +\rho )
   (\rho +2 M_{rate} ) (\beta +\rho +M_{rate} )}.
\end{aligned}
\end{equation*}
Using \eqref{eq-r-alpha-CL} we can write 
\begin{equation*}
\begin{aligned}
M_{rate}  &= \sqrt{\frac{\gamma}{\lambda}}\Delta(N)+ r_{1}(N)+ \mathcal{O}\left(\sqrt{\lambda} \right), \quad  M_{aim}  = 1+ a_{1}(N)\sqrt{\lambda}+  \mathcal{O}\left(\lambda \right),
\end{aligned}
\end{equation*}
for some explicit functions $r_{1}, a_{1}$ of $N$. Here, the explicit dependence on $N$ is omitted in what follows to ease notation.  Using Taylor expansion, we can in turn derive power-series expansions of $\bar w_{1}$, $\bar w_{2}$ and $\bar w_{3}$ around $\lambda=0$,
\begin{equation*}  
\begin{aligned}
\bar w_{1}(\lambda) &=\frac{(1 + 2 \Delta^2 N) \sigma^2}{\gamma \rho (2 \beta + \rho)} +\frac{\sigma^2}{\Delta \gamma^2 \rho (2\beta + \rho)} \big( a_1 \Delta \gamma + 4 a_1 \Delta^3 \gamma N + 
     4 \Delta^2 \sqrt{\gamma}  N r_1 \\
     &\quad  - 
      \sqrt{\gamma} \beta - 
     2 \Delta^2 \sqrt{\gamma}  N\beta - 
      \sqrt{\gamma} \rho - 
     2 \Delta^2 \sqrt{\gamma}  N \rho \big) \sqrt{
   \lambda}+O(\lambda),  \\
\bar w_{2}(\lambda) &= \frac{\Delta^2 N \sigma^2}{\gamma \rho (2 \beta+ \rho)} + \frac{
 2 N \sigma^2(a_1 \Delta^2 \gamma + 
    \Delta \sqrt{\gamma}r_1) }{
 \gamma^2 \rho (2 \beta+ \rho)} \sqrt{\lambda}+O(\lambda),\\
\bar w_{3}(\lambda) &= \frac{(1 + 2 \Delta^2 N) \sigma^2}{2 \gamma \rho (2 \beta + \rho)} +\frac{\sigma^2}{4 \Delta \gamma^{3/2} \rho (2 \beta + \rho)}\big(4 a_1 \Delta \sqrt{\gamma}  (1 + 2 \Delta^2 N) - 
   2 \beta \\
   &\quad  + 2 \Delta^2 N (4 r_1 - 2 \beta - 3 \rho) - 
   3 \rho\big) \sqrt{\lambda} +O(\lambda).
\end{aligned}
\end{equation*}
After inserting these expansions into \eqref{j-cl-dc}, we obtain the asserted leading-order asymptotics~\eqref{eq:valuecl} of the optimal value. More details on these calculations can be found in the Mathematica companion of this paper. 
\end{proof}

\subsection{Identities for the Proofs of Lemmas~\ref{lemma-system-value} and  \ref{lemma-h4-positive}}
\label{app-identities}
In this section we provide some identities for the coefficients of the system \eqref{eq-a-system}--\eqref{eq-f-system}, which were used in Section \ref{s:heuristics} and in the proof of Lemma~\ref{lemma-system-value}. 

The coefficients $e$, $f$ and $c$ are given by,
\begin{equation}
\label{eq-solution-e-f-c}
\begin{aligned}
e&= \frac{1}{h_{4}},\\
f &=\frac{1}{\lambda(N+1) h_4(h_3-(\rho+\beta)) } \\
     & \quad\times \left( d+h_3 \lambda  (N-1) -\frac{2 h_1^2 (N-1)}{     (N+1) \left(4 a (N-1)+2
   d (N-3)+\lambda  \left(N^2-1\right) \rho \right)} \right),\\
c&=\frac{2 \left(-\frac{\frac{\left(\frac{1}{h_4}-\frac{N-1}{h_4 (N+1)}\right) \left(d+h_3 \lambda  (N-1)\right)}{2 \lambda }-\frac{2 h_1^2 (N-1)}{h_4 \lambda  (N+1)^2
   \left(4 a (N-1)+2 d (N-3)+\lambda  \left(N^2-1\right) \rho \right)}}{h_4 \lambda  (N+1) \left(-\beta +h_3-\rho \right)}-\frac{\left(\frac{1}{h_4}-\frac{N-1}{h_4
   (N+1)}\right)^2}{4 \lambda }\right)}{-2 \beta -\rho },
\end{aligned}
\end{equation}
where
\begin{equation}
\label{eq-h-constants}
\begin{aligned}
h_{1} &=  2 d-(N-1)a, \\
h_{2} &= 2 a (N-1)+d (N-3)+\lambda  \left(N^2-1\right) (\beta +\rho), \\ 
h_{3}&=-\frac{ 2a(N-1)+(N-3)d}{\lambda(N-1)(N+1)}, \\ 
h_{4} &= \frac{2 (2 d-(N-1)a)^2 \left(-a \left(N^2+2 N-3\right)+4 d-\lambda  \left(N^2-1\right) \rho \right)}{h_2 \lambda  (N+1)^2 \left(2h_2 -\lambda (N^{2}-1)(2\beta + \rho)\right)}+\frac{(N-1)d+2a N}{\lambda  (N+1)^2} \\
   &\quad   +\beta+\rho.
\end{aligned}
\end{equation}

 \bigskip
\paragraph{Data Availability Statement.}
Data sharing is not applicable to this article as no new data were created or analyzed in this study.


\end{document}